\documentclass{IEEEoj}
\usepackage{cite}
\usepackage{amsmath,amssymb,amsfonts}
\usepackage{graphicx,color}
\usepackage{textcomp}
\usepackage{algorithm}
\usepackage{algpseudocode}
\usepackage{multirow}  %
\usepackage{array}  %
\usepackage{subfig}
\usepackage{amsthm}
\newtheorem{proposition}{Proposition}
\newtheorem{remark}{Remark}

\def\BibTeX{{\rm B\kern-.05em{\sc i\kern-.025em b}\kern-.08em
    T\kern-.1667em\lower.7ex\hbox{E}\kern-.125emX}}
\AtBeginDocument{\definecolor{ojcolor}{cmyk}{0.93,0.59,0.15,0.02}}
\def\OJlogo{\vspace{-4pt}\hskip-4pt\includegraphics[height=18pt]{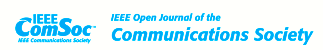}}
\begin{document}
\receiveddate{XX Month, XXXX}
\reviseddate{XX Month, XXXX}
\accepteddate{XX Month, XXXX}
\publisheddate{XX Month, XXXX}
\currentdate{11 January, 2024}
\doiinfo{OJCOMS.2024.011100}

\title{Joint Geometric and QoS-Aware Routing in Optical LEO Satellite Networks via DRL}
\author{%

Abdulrahman~Al-Hababi\IEEEauthorrefmark{1},
Meysam~Ghanbari\IEEEauthorrefmark{2},
Mohammad~Taghi~Dabiri\IEEEauthorrefmark{3},
Rula~Ammuri\IEEEauthorrefmark{4},
Mazen~Hasna\IEEEauthorrefmark{5}\IEEEmembership{(Senior Member, IEEE)},
Khalid~A.~Qaraqe\IEEEauthorrefmark{2}\IEEEmembership{(Senior Member, IEEE)}
}

\affil{
Texas A\&M University at Qatar, Doha, Qatar
}

\affil{
College of Science and Engineering, Hamad Bin Khalifa University, Doha, Qatarn
}

\affil{
Electrical Engineering Division, Department of Engineering, University of Cambridge, Cambridge, United Kingdom
}

\affil{
Professionals for Smart Technology (PST), Amman, Jordan
}

\affil{
Department of Electrical Engineering, Qatar University, Doha, Qatar
}

\corresp{Corresponding author: Abdulrahman~Al-Hababi (e-mail: aalhababi@tamu.edu).}

\authornote{This work was supported by the Qatar Research Development and Innovation Council (QRDI) under Grant No. NPRP14C-0909-210008 and by research funding from Hamad Bin Khalifa University under the Thematic Research Grant Program Cycle 3. The statements made herein are solely the responsibility of the authors. The content is solely the responsibility of the authors and does not necessarily represent the official views of QRDI.}

\markboth{Preparation of Papers for IEEE Open Journals}{Hababi \textit{et al.}}

\begin{abstract}
	Optical inter-satellite links (ISLs) are becoming the backbone of modern LEO constellations, offering high capacity and low latency but introducing stringent geometric and physical-layer constraints. Routing in such networks must therefore account for time-varying topology, jitter-induced outage, and the heterogeneous reliability of intra- and inter-plane optical links, aspects that classical shortest-path or existing learning-based schemes do not fully capture.
	This paper develops a joint geometric and QoS-aware routing framework for optical LEO networks. We derive a closed-form outage expression under Gaussian-beam propagation with pointing errors and obtain analytical maximum feasible link ranges for different ISL classes. These relations remove beam divergence from the optimization variables and embed optical feasibility directly into the routing layer, leading to a latency–reliability–capacity constrained routing formulation that is proved to be NP-hard.
	To enable scalable decision-making, we cast snapshot routing as a Markov decision process and introduce an angle-constrained masked deep Q-network (AC-MDQN) that integrates optical feasibility masks, potential-based latency shaping, and a geometry-aware corridor filter around the source–destination great-circle path. This design significantly reduces the effective action-space complexity while preserving near-optimal routing choices. Simulations on a Starlink-like constellation demonstrate that AC-MDQN achieves end-to-end latency within 1–2\% of constrained shortest-path solutions, remains robust under varying pointing jitter, and supports controllable hop–latency trade-offs through reward design. The results confirm that the proposed framework provides an efficient and physically consistent routing solution for large-scale optical LEO networks.
\end{abstract}

\begin{IEEEkeywords}
	Low Earth orbit (LEO) satellite networks, optical inter-satellite links (ISLs), QoS-aware routing, deep reinforcement learning (DRL).
\end{IEEEkeywords}

\maketitle


\section{Introduction}
\IEEEPARstart{L}{arge-scale} low-Earth-orbit (LEO) constellations have emerged as a key enabler for global broadband connectivity, offering low-latency, high-throughput links to underserved regions and supporting delay-sensitive services such as cloud gaming, financial trading, and real-time sensing \cite{salim2024cybersecurity}. 
To meet these increasing traffic demands, recent LEO systems are shifting from RF to optical inter-satellite links (ISLs). Unlike terrestrial FSO links that suffer from atmospheric absorption and scattering, laser beams propagate in the vacuum of space with minimal loss, enabling multi-Gb/s capacity, narrow beams, and compact terminals \cite{salim2024cybersecurity}. These advantages have motivated major constellations, such as recent generations of Starlink and OneWeb, to adopt optical crosslinks as their primary high-speed backbone \cite{wu2025enhancing}.

Routing in such networks is, however, fundamentally challenging. The LEO topology is highly dynamic, with thousands of satellites moving on multiple orbital planes and constantly changing visibility to each other and to ground gateways. Optical ISLs use tightly focused beams that are inherently sensitive to platform vibration, residual tracking errors, and terminal misalignment, which translate into non-negligible pointing losses and distance-dependent outage probabilities \cite{10681506, 10553228}. Moreover, intra-plane and inter-plane ISLs exhibit markedly different stability and tracking accuracy, leading to heterogeneous reliability across the constellation. As a result, a routing decision must jointly account for (i) geometric aspects such as link distance and great-circle structure, (ii) physical-layer QoS constraints such as outage probability and SNR thresholds, and (iii) network-layer metrics including end-to-end latency, hop count, and capacity constraints. The interaction of these factors, combined with the large constellation size, quickly renders classical routing approaches inadequate.

\subsection{Related Work and Research Gaps}
Routing in LEO and mega-constellation networks has been extensively investigated from algorithmic and architectural perspectives. Survey and modeling studies \cite{cao2022dynamic,Ghanbari2026NarrowBeams} emphasize the need for routing strategies that remain robust under rapidly evolving satellite positions and frequent topology changes. Various works propose deterministic or heuristic routing schemes tailored to LEO networks, including distributed on-demand routing \cite{stock2022distributed}, area-segmentation and grid-based approaches \cite{li2023leo,wang2023orbit}, complexity-reduced near-optimal routing such as MLDR \cite{wang2024mldr}, and recent link-identified architectures \cite{zhang2024link}. RF and optical connectivity have also been compared in terms of topology and delay trends \cite{de2025rf}. These efforts successfully capture geometric and temporal dynamics but treat inter-satellite links as abstract edges with fixed reachability or cost, without embedding physical-layer optical constraints such as beam divergence, pointing jitter, or class-dependent ISL reliability.

A complementary line of work applies deep reinforcement learning (DRL) to routing in dynamic non-terrestrial networks. DRL-based schemes in \cite{dong2023drl,chen2025deep} demonstrate the ability of learned policies to adapt to changing connectivity, balance load, and satisfy service-related constraints. Despite these advantages, the underlying link models remain simplified, relying on static capacities, queueing delay, or heuristic metrics. Physical feasibility constraints, such as outage probability, jitter-induced misalignment, and analytically derived distance limits, are not incorporated during action selection. Consequently, the resulting policies are not physics-informed and may choose routes that are infeasible for realistic optical ISLs.

Recent works have begun examining laser-based ISLs more explicitly. On-demand routing under dynamic laser connectivity is explored in \cite{bhattacharjee2024demand}, while \cite{yang2024analysis} studies laser ISL topology construction, and \cite{10437539} analyzes delay performance in dense constellations with ISLs. These studies highlight the importance of optical ISLs for low-latency global transport. However, optical links are typically modeled as ideal or fixed-reach once activated, and pointing jitter, divergence constraints, and outage-based reliability limitations are not converted into closed-form feasibility regions. The differences between intra-plane and inter-plane links are captured geometrically but not through their distinct tracking accuracy or jitter profiles.

Taken together, the above studies address dynamic topology, DRL-based routing, and laser ISL deployment, but none jointly integrates (i) a Gaussian-beam optical model with pointing jitter, (ii) analytic outage-derived maximum link ranges distinguishing intra- and inter-plane ISLs, and (iii) a scalable DRL framework that enforces these physical constraints through action masking and exploits geometric corridor structure to reduce decision complexity. This gap motivates the present work, which develops a physics-informed and geometry-aware routing framework that embeds optical feasibility directly into both the network model and the learning agent, enabling physically consistent and scalable routing for next-generation optical LEO constellations.

\subsection{Contributions}
This work addresses several key gaps in existing LEO routing research, including the absence of a physics-based optical ISL model in routing decisions, the lack of jitter-dependent intra-/inter-plane link separation, the absence of closed-form outage-derived range constraints, and the lack of scalable DRL routing mechanisms that enforce physical feasibility and exploit geometric corridor structure. The main contributions are summarized as follows:
\begin{itemize}
	\item We develop a physics-consistent routing model that incorporates a Gaussian-beam optical ISL formulation with pointing jitter, and we derive closed-form outage-based feasibility limits that yield class-dependent maximum link ranges for intra-plane and inter-plane connections.
	
	\item We reformulate latency–reliability–capacity constrained routing by embedding these optical feasibility constraints directly into the network graph, creating a geometry-aware and QoS-aware routing problem that cannot be captured by existing abstract link models.
	
	\item We design a scalable deep reinforcement learning framework in which optical feasibility is enforced through action masking, and the geometric concentration of near-optimal routes is exploited via an angular-corridor constraint, significantly reducing the effective decision space.
	
	\item We develop a latency- and reliability-aligned reward structure that enables the DRL agent to balance propagation delay, hop count, and outage risk in a dynamic LEO environment.
	
	\item We evaluate the effect of tracking accuracy on routing feasibility under realistic jitter levels and show that the proposed AC-MDQN remains reliable across a wide range of pointing-error conditions, with performance degrading gracefully as the optical constraints tighten.
	
	\item We investigate the impact of reward-parameter tuning and show that the agent can balance end-to-end latency against the number of relays, enabling explicit control over the degree to which satellites are involved along the route and providing flexibility under different load and QoS regimes.
\end{itemize}

\section{System Model}

\subsection{Network Entities and Graph Representation}
We consider a dynamic LEO constellation consisting of a set of satellites $\mathcal{S}$ uniformly distributed over multiple orbital planes, and a set of ground gateways $\mathcal{G}=\{s_g,d_g\}$. At time $t$, the network is modeled as a directed graph
\begin{align}
	\mathcal{V}(t) &= \mathcal{S}(t) \cup \mathcal{G}, \qquad
	\mathcal{E}(t) = \mathcal{E}_{\text{intra}}(t) \cup \mathcal{E}_{\text{inter}}(t) \cup \mathcal{E}_{\text{gw}}(t),
	\label{eq:graph_model}
\end{align}
where $\mathcal{E}_{\text{intra}}$ denotes intra-plane optical inter-satellite links (ISLs), $\mathcal{E}_{\text{inter}}$ denotes inter-plane ISLs, and $\mathcal{E}_{\text{gw}}$ includes satellite–gateway access and egress links.

\subsubsection{LEO Constellation Geometry}
The satellites are assumed to follow circular orbits at altitude $h$ with Earth radius $R_e$ and orbital radius $r = R_e + h$. Each orbital plane has inclination $\theta_{\mathrm{inc}}$ and right ascension of ascending node (RAAN) separated by $\Delta\Omega$. The angular position of satellite $k$ in plane $p$ at time $t$ is expressed as
\begin{align}
	\boldsymbol{r}_{p,k}(t) = r \, \boldsymbol{R}_z(\Omega_p) \, \boldsymbol{R}_x(\theta_{\mathrm{inc}})
	\begin{bmatrix}
		\cos(\omega t + \phi_{p,k}) \\ \sin(\omega t + \phi_{p,k}) \\ 0
	\end{bmatrix},
	\label{eq:sat_pos}
\end{align}
where $\omega = \sqrt{\mu / r^3}$ is the orbital angular velocity, $\mu$ is Earth’s gravitational constant, and $\phi_{p,k}$ is the phase offset within the plane. The relative motion of satellites induces time-varying link distances $\ell_e(t)$ and network topology $\mathcal{E}(t)$.

\subsubsection{Ground Gateways and Visibility}
Each gateway $g\in\mathcal{G}$ is located at a fixed geodetic coordinate $(\varphi_g,\lambda_g)$, mapped to ECEF position $\boldsymbol{r}_g = R_e [\cos\varphi_g\cos\lambda_g, \cos\varphi_g\sin\lambda_g, \sin\varphi_g]^T$.  
A satellite $i$ is visible to gateway $g$ only if its elevation angle $\varepsilon_{i,g}(t)$ satisfies
\begin{align}
	\varepsilon_{i,g}(t) \ge \varepsilon_{\min},
	\label{eq:elevation_mask}
\end{align}
where $\varepsilon_{\min}$ is the minimum elevation mask (typically $10^\circ$–$20^\circ$) accounting for atmosphere and line-of-sight blockage.

\subsection{Optical Inter-Satellite Link Model}
Each optical ISL operates with a Gaussian beam between nodes $i$ and $j$, characterized by transmit power $P_t$, aperture radii $(a_t,a_r)$, optical efficiencies $(\eta_t,\eta_r)$, and beam divergence angle $\theta_{\text{div}}$ that determines the beam expansion over distance $\ell_e(t)$.

\subsubsection{Gaussian Beam Propagation and Divergence}
The beam radius at range $\ell_e$ is given by \cite{Ghanbari2024UAVHAP}
\begin{align}
	w(\ell_e) = w_0 \sqrt{1+\left(\frac{\ell_e\,\theta_{\text{div}}}{w_0}\right)^{\!2}}
	\approx w_0 \sqrt{1+\left(\frac{\lambda \ell_e}{\pi w_0^2}\right)^{\!2}},
	\label{eq:beam_radius}
\end{align}
where $w_0$ is the beam waist radius at the transmitter and $\lambda$ is the optical wavelength. A smaller divergence angle $\theta_{\text{div}}$ (tighter beam) decreases diffraction loss but increases sensitivity to pointing errors; hence, $\theta_{\text{div}}$ influences the outage probability in a distance-dependent manner.

Under ideal alignment, the received optical power is
\begin{align}
	P_{r,0}(e,t) = P_t\,\eta_t\,\eta_r
	\left(\frac{a_r}{w(\ell_e)}\right)^{\!2}
	\exp(-\alpha\,\ell_e(t)),
	\label{eq:Pr0}
\end{align}
where $\alpha$ is the atmospheric attenuation coefficient, negligible for space-to-space ISLs \cite{Dabiri2025OAMFSO}.

\subsubsection{Pointing Error and Outage Probability}
Let the pointing error $\theta$ follow a zero-mean Gaussian distribution, $\theta \sim \mathcal{N}(0,\sigma_\theta^2)$, where $\sigma_\theta$ represents the residual tracking jitter caused by platform vibration and control dynamics.  
The effective lateral offset at range $\ell_e$ is $r_\perp = \ell_e \theta$, yielding the instantaneous pointing loss \cite{dabiri2019optimal}
\begin{align}
	L_{\text{pt}}(\theta,\ell_e) = \exp\!\Big(-2\frac{\ell_e^2\theta^2}{w^2(\ell_e)}\Big).
	\label{eq:pointing_loss}
\end{align}
Accordingly, the instantaneous received power is
\begin{align}
	P_r(e,t) = P_{r,0}(e,t)\,L_{\text{pt}}(\theta,\ell_e).
	\label{eq:Pr_final}
\end{align}
The corresponding instantaneous signal-to-noise ratio (SNR) is
\begin{align}
	\gamma_e(t) = \frac{P_r(e,t)}{N_0 B},
	\label{eq:snr_def}
\end{align}
where $N_0$ and $B$ are the noise spectral density and link bandwidth, respectively.

For a target SNR threshold $\gamma_{\text{th}}$, the outage probability is obtained by averaging over the pointing error:
\begin{align}
	P_{\text{out},e}(t)
	&= \Pr\big\{\gamma_e(t)<\gamma_{\text{th}}\big\} \nonumber\\[3pt]
	&= 2Q\!\left(
	\frac{1}{\sigma_\theta}
	\sqrt{
		\frac{w^2(\ell_e)}{2\,\ell_e^2}
		\ln\!\Big(\frac{P_{r,0}(e,t)}{\gamma_{\text{th}}N_0B}\Big)
	}
	\right),
	\label{eq:pout}
\end{align}
where $Q(\cdot)$ denotes the Gaussian tail function.  
A feasible ISL must satisfy the target reliability constraint
\begin{align}
	P_{\text{out},e}(t) \le P_{\text{out,th}},
	\label{eq:pout_thresh}
\end{align}
which implicitly relates allowable distance, divergence angle, and tracking accuracy.

\subsubsection{Intra-Plane vs. Inter-Plane Link Characteristics}
Intra-plane and inter-plane ISLs exhibit inherently different stability and tracking accuracy due to constellation geometry and relative satellite motion.  
Satellites within the same orbital plane maintain nearly constant angular separation and relative velocity, resulting in smaller pointing jitter variance $\sigma_{\theta,\text{intra}}^2$ and consequently lower outage probability.  
In contrast, inter-plane ISLs experience faster relative motion and more frequent alignment updates, leading to larger tracking error variance $\sigma_{\theta,\text{inter}}^2>\sigma_{\theta,\text{intra}}^2$ \cite{dabiri2025all}.  
Therefore, inter-plane links generally have reduced reliability for identical optical parameters, a factor explicitly captured through the outage probability model in \eqref{eq:pout}–\eqref{eq:pout_thresh} and considered in the subsequent routing formulation.

\subsection{Propagation Delay Model}
The pure propagation delay on link $e$ is determined by the geometric path length:
\begin{align}
	\tau_e(t) = \frac{\ell_e(t)}{c},
	\label{eq:prop_delay}
\end{align}
where $c$ is the speed of light in vacuum. The end-to-end propagation latency along a path $\mathcal{P}$ is therefore
\begin{align}
	T_{\text{prop}}(t) = \sum_{e\in\mathcal{P}} \frac{\ell_e(t)}{c}.
	\label{eq:e2e_prop_delay}
\end{align}
Additional queuing or processing delays will be introduced later in the routing formulation, while the physical model above captures the geometric latency component governed by constellation dynamics and beam alignment.

\section{Problem Formulation}
We seek a minimum-latency route from $s_g$ to $d_g$ on the network snapshot at time $t$. The graph and link sets follow \eqref{eq:graph_model}. Each feasible link must meet both the geometric visibility condition \eqref{eq:elevation_mask} and the reliability constraint \eqref{eq:pout_thresh}. The propagation delay of each link is given by \eqref{eq:prop_delay}.

\subsection{Feasible Link Set}
Routing decisions are restricted to links that are both visible and sufficiently reliable according to the outage probability model in \eqref{eq:pout}–\eqref{eq:pout_thresh}. Accordingly, we define
\begin{align}
	\mathcal{E}_{\mathrm{feas}}(t)
	&= 
	\Big\{\, e\in\mathcal{E}(t):~
	P_{\text{out},e}(t)\le P_{\text{out,th}}, \nonumber\\
	&\qquad\qquad
	\text{and if } e\in\mathcal{E}_{\text{gw}}(t),~
	\varepsilon_{i,g}(t)\ge \varepsilon_{\min}\,\Big\}.
	\label{eq:feasible_set_pf}
\end{align}
This feasible set implicitly reflects the effects of beam divergence, link distance, and tracking jitter on link reliability through \eqref{eq:pout}.

Let $x_e\in\{0,1\}$ denote whether link $e\in\mathcal{E}_{\mathrm{feas}}(t)$ is selected in the routing path.  
The binary flow-conservation rule enforces a single end-to-end path between source and destination gateways:
\begin{align}
	\sum_{e=(v\to j)} x_e - \sum_{e=(i\to v)} x_e &=
	\begin{cases}
		1, & v=s_g,\\
		-1, & v=d_g,\\
		0, & \text{otherwise},
	\end{cases}
	\quad \forall v\in\mathcal{V}(t).
	\label{eq:flow_cons}
\end{align}
The hop count along the path is given by
\begin{align}
	H \;=\; \sum_{e\in\mathcal{E}_{\mathrm{feas}}(t)} x_e.
	\label{eq:hop_count}
\end{align}

\subsection{End-to-End Latency Model}
The total latency includes propagation delay and additional per-hop overheads such as switching or queuing. Using \eqref{eq:prop_delay}, we write
\begin{align}
	T(t)
	&= \sum_{e\in\mathcal{E}_{\mathrm{feas}}(t)} \frac{\ell_e(t)}{c}\,x_e 
	\;+\; \tau^{\text{hop}} H 
	\;+\; \sum_{e\in\mathcal{E}_{\mathrm{feas}}(t)} \tau^{\text{queue}}_e\,x_e,
	\label{eq:e2e_latency}
\end{align}
where $\tau^{\text{hop}}$ is the average processing delay per relay and $\tau^{\text{queue}}_e$ represents queuing delay, which may depend on traffic load or buffer dynamics.  
In light-load conditions, $\tau^{\text{queue}}_e \le \bar\tau^{\text{queue}}$, and a conservative bound yields
\begin{align}
	T(t)\;\le\; \frac{1}{c}\sum_{e} \ell_e(t)\,x_e 
	\;+\; \big(\tau^{\text{hop}}+\bar\tau^{\text{queue}}\big)H.
	\label{eq:e2e_latency_bound}
\end{align}

Each link’s outage probability $P_{\text{out},e}(t)$ follows \eqref{eq:pout}, and the end-to-end reliability requirement $R_{\min}\in(0,1)$ is enforced as
\begin{align}
	\sum_{e\in\mathcal{E}_{\mathrm{feas}}(t)} \log\!\big(1-P_{\text{out},e}(t)\big)\,x_e 
	\;\ge\; \log R_{\min}.
	\label{eq:e2e_reliability}
\end{align}
For a flow demand $r$ and supported rate $C_e$ on each selected link, we impose
\begin{align}
	\sum_{e\in\mathcal{E}_{\mathrm{feas}}(t)} \frac{r}{C_e}\,x_e \;\le\; 1,
	\label{eq:e2e_capacity}
\end{align}
ensuring that the aggregated load remains within the available link capacities.  
If a global latency budget $D_{\max}$ is specified, we additionally require
\begin{align}
	T(t)\;\le\; D_{\max}.
	\label{eq:e2e_delay_budget}
\end{align}

\subsection{Objectives}
The routing objective is to minimize the end-to-end latency while maintaining the reliability and capacity constraints.

\subsubsection{Exact Latency Minimization}
\begin{align}
	\min_{\{x_e\}} \quad & T(t)
	\label{eq:obj_exact}\\
	\text{s.t.}\quad &
	\eqref{eq:feasible_set_pf},\ 
	\eqref{eq:flow_cons},\ 
	\eqref{eq:e2e_reliability},\ 
	\eqref{eq:e2e_capacity},\ 
	\eqref{eq:e2e_delay_budget},\ 
	x_e\in\{0,1\}.
	\notag
\end{align}

\subsubsection{Length–Hop Surrogate Formulation}
By substituting the upper bound \eqref{eq:e2e_latency_bound}, define $\beta=\tau^{\text{hop}}+\bar\tau^{\text{queue}}$, and minimize
\begin{align}
	\min_{\{x_e\}} \quad &
	\frac{1}{c}\sum_{e\in\mathcal{E}_{\mathrm{feas}}(t)} \ell_e(t)\,x_e 
	\;+\; \beta \sum_{e\in\mathcal{E}_{\mathrm{feas}}(t)} x_e
	\label{eq:obj_surrogate}\\
	\text{s.t.}\quad &
	\eqref{eq:feasible_set_pf},\ 
	\eqref{eq:flow_cons},\ 
	\eqref{eq:e2e_reliability},\ 
	\eqref{eq:e2e_capacity},\ 
	x_e\in\{0,1\}.
	\notag
\end{align}
The first term captures the physical propagation component of latency determined by link geometry, while the second penalizes excessive relays, encouraging routes with fewer hops whenever link reliability allows longer connections through wider beam divergence or improved tracking.

Programs \eqref{eq:obj_exact} and \eqref{eq:obj_surrogate} are combinatorial due to the binary routing variables $x_e$.  
The reliability constraint \eqref{eq:e2e_reliability} couples the geometric link lengths and optical parameters (beam divergence and pointing jitter) through the outage model \eqref{eq:pout}.  
Moreover, the network topology and link distances evolve over time as satellites move according to \eqref{eq:graph_model}.  
These properties motivate a learning-based route selection method to approximate near-optimal decisions under dynamic geometry and probabilistic link quality.

\section{Analytical Foundations and Feasibility Constraints}

\subsection{Computational Complexity}
We first characterize the computational complexity of the routing problem with reliability and capacity constraints as formulated in Section~IV.

\begin{proposition}
	The optimization problem \eqref{eq:obj_exact} (and its surrogate form \eqref{eq:obj_surrogate}), 
	subject to the outage-based feasibility constraint \eqref{eq:feasible_set_pf} 
	and the end-to-end reliability constraint \eqref{eq:e2e_reliability}, 
	is NP-hard.
\end{proposition}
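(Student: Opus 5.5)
We prove NP-hardness by a polynomial-time reduction from the Restricted Shortest Path (RSP) problem, also called the delay-constrained least-cost path problem. RSP is known to be NP-hard: it contains \textsc{Partition} as a special case. An RSP instance consists of a directed graph $G=(V,E)$ with nonnegative integer costs $c_e$ and delays $d_e$, two nodes $s,t$, and a bound $D$. The task is to find an $s$--$t$ path of minimum total cost subject to $\sum_{e} d_e x_e \le D$. Since \eqref{eq:obj_surrogate} drops the delay-budget constraint but keeps \eqref{eq:e2e_reliability}, we reduce to the surrogate form. The exact form \eqref{eq:obj_exact} then follows by the same construction with $\tau^{\text{queue}}_e=0$ and $D_{\max}$ set large enough to be inactive.

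\textbf{Construction.} Given an RSP instance, we build a snapshot graph $\mathcal{V}(t)=V$ with $s_g=s$ and $d_g=t$, and we declare every edge of $E$ feasible, i.e.\ $E=\mathcal{E}_{\mathrm{feas}}(t)$. To make the latency objective encode cost, we set $\ell_e=c\cdot c_e$ and $\beta=0$, so the objective in \eqref{eq:obj_surrogate} equals $\sum_e c_e x_e$. If one insists on $\beta>0$, we instead shift every cost by a common constant. To handle the hop-count bias this introduces, we first make all $s$--$t$ paths have equal hop count by subdividing edges into chains padded with zero-delay dummy nodes, which makes the hop term constant.

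We encode delays through reliability. We choose $M>D$ and set the per-link outage probability so that
\begin{equation*}
-\log\bigl(1-P_{\text{out},e}\bigr)=\epsilon\, d_e,
\end{equation*}
and we set $R_{\min}=e^{-\epsilon D}$. With this choice, \eqref{eq:e2e_reliability} becomes exactly $\sum_e d_e x_e\le D$. We pick $\epsilon$ small enough that every $P_{\text{out},e}\le P_{\text{out,th}}$, so all edges pass the per-link test \eqref{eq:feasible_set_pf}. The capacity constraint \eqref{eq:e2e_capacity} is made inactive by taking $r$ sufficiently small, for instance $r\le \min_e C_e/|E|$.

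\textbf{Realizability.} The main obstacle is showing that the prescribed pairs $(\ell_e,P_{\text{out},e})$ can actually be produced by the physical model \eqref{eq:pout}. Length and outage are coupled through $w(\ell_e)$ and $P_{r,0}$. To decouple them, we use the per-link freedom in the jitter $\sigma_\theta$, or equivalently in $P_t$, $\theta_{\text{div}}$, or $\gamma_{\text{th}}$ assigned per link class. For fixed $\ell_e$, the expression \eqref{eq:pout} is $2Q(\kappa_e/\sigma_\theta)$, which is continuous and strictly increasing in $\sigma_\theta$ and ranges over $(0,1)$. Hence any target $P_{\text{out},e}\in(0,1)$ is attained by a unique $\sigma_{\theta,e}$, provided $P_{r,0}>\gamma_{\text{th}}N_0B$; this last condition is ensured by choosing $P_t$ large. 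For the reduction we only need the rational values $\epsilon d_e$ to be representable to polynomially many bits. Because RSP is NP-hard already with polynomially bounded integer data, a fixed-precision rounding of $\sigma_{\theta,e}$ preserves the ordering of feasible versus infeasible paths: the integer gap of $1$ in delay sums dominates the rounding error. This is the step I expect to require the most care.

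\textbf{Correctness and conclusion.} Any feasible $\{x_e\}$ for \eqref{eq:obj_surrogate} is an $s$--$t$ path by \eqref{eq:flow_cons}. One caveat is that disjoint cycles with zero net flow could also appear; these only add nonnegative cost, so an optimal solution contains none, and optimal values coincide. The path has cost $\sum c_e x_e$ and satisfies the RSP delay bound, and conversely every RSP-feasible path is feasible here. Thus a polynomial-time algorithm for \eqref{eq:obj_surrogate} would solve RSP, and hence \textsc{Partition}. This establishes NP-hardness of \eqref{eq:obj_surrogate}, and of \eqref{eq:obj_exact} via the specialization noted above.
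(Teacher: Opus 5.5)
Your reduction follows the paper's own route: neglect queueing, take logarithms so that the product reliability becomes an additive budget, and recognize a restricted (constrained) shortest-path problem. The paper treats the $P_{\text{out},e}$ as free constants and cites the constrained-shortest-path literature. Your realizability paragraph adds something real, because it is what justifies the paper's unproved claim that the full physical model ``strictly generalizes'' this core.

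However, the step you flagged as delicate rests on a false statement. RSP is only weakly NP-hard. Dynamic programming over the residual delay budget solves it in $O(|E|\,D)$ time, and it admits an FPTAS, so with polynomially bounded integer delays it is solvable in polynomial time. If your precision argument really needed that restriction, you would be reducing from a problem in P and would prove nothing.

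It does not need it. Keep the Partition-derived data in binary, possibly exponentially large. Take $\epsilon=2^{-k}$ with $k=O(\log(\max_e d_e+1)+\log(1/P_{\text{out,th}}))$, so every per-link test passes. Put the threshold in the middle of the integer gap, $R_{\min}=e^{-\epsilon(D+1/2)}$; your choice $e^{-\epsilon D}$ puts paths of delay exactly $D$ on the boundary, where any rounding can flip them. Realize each $-\log(1-P_{\text{out},e})$, via $\sigma_{\theta,e}$ or $P_t$, to within $\epsilon/(4|E|)$. Then the total error on any edge set is at most $\epsilon/4$, feasibility is decided correctly, and each number needs only $O(k+\log|E|)$ bits, which is polynomial in the input size.

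The realizability step also needs two smaller repairs.
\begin{enumerate}
\item With $\beta=0$ and $\ell_e=c\,c_e$, a zero-cost edge gets $\ell_e=0$. Since $w(0)=w_0>0$, the factor $w^2(\ell_e)/(2\ell_e^2)$ in \eqref{eq:pout} blows up, so $P_{\text{out},e}=0$ for every $\sigma_\theta$ and every $P_t$. Hence the cost-$0$, delay-$a_i$ edges of the Partition gadget cannot be realized. Add a common positive constant to all costs. This is harmless because every $s$--$t$ path in that layered gadget has the same hop count.
\item Zero-delay edges would need $P_{\text{out},e}=0$ exactly, which is only a limit as $\sigma_\theta\to 0$. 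Give them a tiny positive outage absorbed into the $\epsilon/(4|E|)$ rounding budget.
\end{enumerate}
The same layered structure is what makes your hop-count equalization for $\beta>0$ work; more generally, a DAG with a topological potential suffices. Subdividing edges does not equalize hop counts in a general digraph.
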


\textit{Sketch of Proof:}
	Consider a simplified instance of \eqref{eq:obj_exact} where queuing delays are neglected 
	and link reliabilities $P_{\text{out},e}(t)$ are treated as fixed constants. 
	Let $\mathcal{P}$ denote a candidate path from $s_g$ to $d_g$. 
	The end-to-end reliability can be expressed as
	\begin{align}
		R(\mathcal{P}) = \prod_{e\in\mathcal{P}} (1-P_{\text{out},e}),
		\label{eq:proof_rel}
	\end{align}
	while the total latency cost is additive:
	\begin{align}
		C(\mathcal{P}) = \sum_{e\in\mathcal{P}} \Big(\frac{\ell_e(t)}{c} + \beta\Big).
		\label{eq:proof_cost}
	\end{align}
	The problem then reduces to finding a path $\mathcal{P}$ that minimizes $C(\mathcal{P})$ 
	subject to a minimum reliability requirement $R(\mathcal{P}) \ge R_{\min}$. 
	Taking the logarithm of \eqref{eq:proof_rel} converts the multiplicative constraint into an additive form:
	\begin{align}
		\sum_{e\in\mathcal{P}} \log(1-P_{\text{out},e}) \ge \log R_{\min},
		\label{eq:proof_log}
	\end{align}
	which is equivalent to the additive reliability expression in \eqref{eq:e2e_reliability}. 
	Therefore, the simplified version of \eqref{eq:obj_exact} 
	becomes a constrained shortest-path problem with an additive budget constraint \eqref{eq:proof_log}, 
	a class of problems known to be NP-hard \cite{kuipers2005conditions}. 
	Since the full formulation further couples $P_{\text{out},e}(t)$ with geometric and optical parameters 
	such as $\ell_e(t)$ and $\theta_{\text{div},e}$, it strictly generalizes this NP-hard core. 
	Consequently, the overall routing problem remains NP-hard. 
	Note that in the full dynamic setting, link reliabilities and delays evolve over time due to UAV mobility 
	and geometry variations. This temporal coupling further increases problem complexity 
	but does not alter the NP-hard nature established above.

\subsection{Temporal Dynamics and Snapshot Optimization}
The network is time-varying through satellite motion, hence $\ell_e(t)$ and $P_{\text{out},e}(t)$ evolve with $t$ (cf. \eqref{eq:graph_model}, \eqref{eq:e2e_prop_delay}, and \eqref{eq:pout}). We adopt a \emph{snapshot} optimization at decision epochs $t_k$, producing a route (and, below, per-link optical settings) for interval $[t_k,t_{k+1})$. This yields a sequential decision process over time; in next section we model the sequence $\{t_k\}$ as an MDP to learn routing policies that adapt to geometry and link statistics \cite{barto2021reinforcement}.

\subsection{Outage Constraint as a Function of Distance}
From \eqref{eq:beam_radius} and \eqref{eq:pout}, the outage probability on link $e$ at time $t$ depends explicitly on the geometric distance $\ell_e(t)$ and the divergence angle $\theta_{\text{div}}$ through the beam radius $w(\ell_e)$. 
For inter-satellite ranges $\ell_e \gg \pi w_0^2/\lambda$, we use the far-field approximation $w(\ell_e)\!\approx\!\theta_{\text{div}}\ell_e$, which is accurate for LEO ISLs. 
By rewriting \eqref{eq:Pr0} under boresight coupling with a fixed receive aperture, the on-axis received power can be expressed as 
\[
P_{r,0}(e,t;\theta_{\text{div}})=\kappa_e(t)\,w^{-2},
\]
which leads to the following closed-form expression for the outage probability:
\begin{align}
	P_{\text{out},e}\big(t;\theta_{\text{div}}\big)
	= 2Q\!\left(
	\frac{\theta_{\text{div}}}{\sqrt{2}\,\sigma_{\theta,e}}\,
	\sqrt{\,a_e(t)-2\ln(\theta_{\text{div}}\ell_e)\,}
	\right),
	\label{eq:pout_theta_sigma}
\end{align}
where $a_e(t)\triangleq \ln\!\big(\kappa_e(t)/(\gamma_{\text{th}}N_0B)\big)$ collects the range-independent parameters, and $\sigma_{\theta,e}$ denotes the RMS pointing jitter.

\begin{proposition}[Maximum feasible link length as a function of jitter]
	Fix a link $e$ and time $t$ with parameters $(a_e(t),\sigma_{\theta,e},\theta_{\max})$.
	For a given outage target $P_{\text{out,th}}\in(0,1)$, define $q_{\text{th}}=Q^{-1}(P_{\text{out,th}}/2)>0$. 
	Under the far-field model, the maximum feasible link length $\ell_{\max}(t;\sigma_{\theta,e})$ satisfying $P_{\text{out},e}\!\big(t;\theta_{\text{div}}\big)\le P_{\text{out,th}}$ for some admissible $\theta_{\text{div}}\!\in(0,\theta_{\max}]$ is
	\begin{align}
		\ell_{\max}(t;\sigma_{\theta,e})
		= \frac{\exp\!\big(\frac{a_e(t)}{2}-\tfrac{1}{2}\big)}{\sqrt{2}\,\sigma_{\theta,e}\,q_{\text{th}}},
		\label{eq:lmax_unconstrained_final}
	\end{align}
	provided that the corresponding optimal divergence
	\begin{align}
		\theta_{\text{div}}^\star(\ell_{\max},t)
		= \frac{1}{\ell_{\max}}\,
		\exp\!\Big(\frac{a_e(t)}{2}-\tfrac{1}{2}\Big)
		\label{eq:theta_star_for_lmax}
	\end{align}
	does not exceed $\theta_{\max}$. 
	If $\theta_{\text{div}}^\star>\theta_{\max}$, the range is limited by the divergence cap:
	\begin{align}
		\ell_{\max}^{(\mathrm{bd})}(t;\sigma_{\theta,e},\theta_{\max})
		= \frac{1}{\theta_{\max}}\,
		\exp\!\left(
		\frac{a_e(t)}{2}
		- \frac{(\sqrt{2}\sigma_{\theta,e}q_{\text{th}})^2}{2\,\theta_{\max}^2}
		\right).
		\label{eq:lmax_bd_final}
	\end{align}
\end{proposition}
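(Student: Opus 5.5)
The plan is to turn the outage constraint into an explicit inequality in $\ln\ell_e$ and then optimize over the admissible divergence $\theta_{\text{div}}$. Since $Q(\cdot)$ is strictly decreasing, \eqref{eq:pout_theta_sigma} gives that $P_{\text{out},e}(t;\theta_{\text{div}})\le P_{\text{out,th}}$ holds if and only if the argument of $Q$ is at least $q_{\text{th}}=Q^{-1}(P_{\text{out,th}}/2)$. This requires $a_e(t)-2\ln(\theta_{\text{div}}\ell_e)\ge 0$, so that the square root is real. Squaring and writing $c\triangleq(\sqrt{2}\,\sigma_{\theta,e}q_{\text{th}})^2=2\sigma_{\theta,e}^2q_{\text{th}}^2$, I would rewrite the condition as
\begin{align}
2\ln\ell_e \;\le\; a_e(t) + g(\theta_{\text{div}}), \qquad g(\theta)\triangleq -2\ln\theta-\frac{c}{\theta^2}.
\end{align}
Any $\ell_e$ satisfying this inequality automatically gives $a_e(t)-2\ln(\theta_{\text{div}}\ell_e)\ge c/\theta_{\text{div}}^2>0$, so the reality condition is implied. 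For fixed $\theta_{\text{div}}$ the feasible lengths therefore form an interval $(0,\ell(\theta_{\text{div}})]$ with $\ell(\theta)=\exp\big((a_e(t)+g(\theta))/2\big)$. The maximum feasible length over admissible divergences is then $\exp\big((a_e(t)+\sup_{0<\theta\le\theta_{\max}}g(\theta))/2\big)$.

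The next step is a one-dimensional maximization of $g$ on $(0,\theta_{\max}]$. We have $g'(\theta)=-2/\theta+2c/\theta^3=\tfrac{2}{\theta^3}(c-\theta^2)$. Hence $g$ is strictly increasing on $(0,\sqrt{c})$ and strictly decreasing on $(\sqrt{c},\infty)$, with $g\to-\infty$ at both ends. This gives a unique global maximizer $\theta^\star=\sqrt{c}=\sqrt{2}\,\sigma_{\theta,e}q_{\text{th}}$ with $g(\theta^\star)=-\ln c-1$.

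If $\theta^\star\le\theta_{\max}$, substituting gives $2\ln\ell_{\max}=a_e(t)-1-\ln c$. This is exactly \eqref{eq:lmax_unconstrained_final}. Inverting yields $\theta^\star=\exp(a_e(t)/2-1/2)/\ell_{\max}$, which recovers \eqref{eq:theta_star_for_lmax}; so the proviso ``$\theta_{\text{div}}^\star\le\theta_{\max}$'' is the same as $\sqrt{c}\le\theta_{\max}$. If instead $\theta^\star>\theta_{\max}$, monotonicity of $g$ on $(0,\sqrt{c})$ puts the constrained maximum at the boundary $\theta=\theta_{\max}$. Substituting gives $\ell=\theta_{\max}^{-1}\exp\big(a_e(t)/2-c/(2\theta_{\max}^2)\big)$, which is \eqref{eq:lmax_bd_final}.

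The only delicate point is justifying that "maximum feasible length" means $\sup_\theta\ell(\theta)$. This needs two facts: the feasibility set in $\ell_e$ is downward closed for each $\theta$, which follows from the monotone dependence on $\ln\ell_e$ above; and the domain restriction of the square root never binds at the optimum. Beyond that, I expect the main obstacle to be simply keeping the sign and monotonicity bookkeeping straight. The calculus itself is routine once the reformulation via $g$ is in place.
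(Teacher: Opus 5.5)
Your proof is correct, but you do the two optimizations in the opposite order from the paper.

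The paper fixes $\ell_e$ and maximizes $F(\theta_{\text{div}})=\theta_{\text{div}}^2[a_e(t)-2\ln(\theta_{\text{div}}\ell_e)]$ by setting its derivative to zero. This gives the length-dependent optimizer $\theta_{\text{div}}^\star(\ell_e,t)=\ell_e^{-1}\exp(a_e(t)/2-1/2)$ and a best $Q$-argument that decays as $1/\ell_e$, which the paper then inverts against $q_{\text{th}}$. For the capped case, the paper simply asserts that the maximizer moves to $\theta_{\max}$ and solves the boundary equation.

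You instead fix $\theta_{\text{div}}$, solve the squared constraint for the largest admissible $\ln\ell_e$, and then maximize $g(\theta)=-2\ln\theta-c/\theta^2$ over $(0,\theta_{\max}]$. This buys you three things:
\begin{itemize}
\item The optimal divergence comes out as the constant $\sqrt{2}\,\sigma_{\theta,e}q_{\text{th}}$, independent of $a_e(t)$. The proviso in the statement therefore becomes the explicit jitter condition $\sqrt{2}\,\sigma_{\theta,e}q_{\text{th}}\le\theta_{\max}$. In the paper this is only implicit, since one must evaluate $\theta_{\text{div}}^\star$ at $\ell_{\max}$ to see it.
\item The capped case follows from monotonicity of a single function on $(0,\sqrt{2}\,\sigma_{\theta,e}q_{\text{th}})$. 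You do not have to check how an $\ell$-dependent interior optimizer compares with $\theta_{\max}$ at the critical length, a consistency check the paper omits.
\item You verify the sign of $g'$ and that the square root stays real. The paper only sets a derivative to zero.
\end{itemize}

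The paper's order makes the $1/\ell_e$ decay, and hence $\ell_{\max}\propto 1/\sigma_{\theta,e}$, immediately visible. Otherwise the two routes are equivalent ways of computing $\sup\{\ell_e:\exists\,\theta_{\text{div}}\in(0,\theta_{\max}],\ P_{\text{out},e}(t;\theta_{\text{div}})\le P_{\text{out,th}}\}$.
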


\textit{Proof:}
	The detailed proof is provided in Appendix~\ref{app:proof_lmax}.

Equation \eqref{eq:lmax_unconstrained_final} clearly shows that $\ell_{\max}\!\propto\!1/\sigma_{\theta,e}$: higher jitter directly reduces the maximum reliable range even under optimal beam tuning. 

\begin{table}[t]
	\centering
	\caption{Simulation parameters used for Proposition evaluation and exhaustive 2D verification.}
	\label{tab:sim_params_full}
	\begin{tabular}{l c}
		\hline
		Parameter & Value \\
		\hline
		Optical wavelength $\lambda$ & $1550$ nm \\
		Transmit power $P_t$ & $1$ W \\
		Transmitter/receiver efficiencies $(\eta_t,\eta_r)$ & $(0.5,\,0.5)$ \\
		Receiver aperture radius $a_r$ & $5$ cm \\
		System loss $L_{\text{const}}$ & $10$ dB \\
		Noise$\times$bandwidth $N_0B$ & $10^{-12}$ W \\
		SNR threshold $\gamma_{\text{th}}$ & $10$ (linear) \\
		Maximum divergence $\theta_{\max}$ & $1$ mrad \\
		Outage threshold $P_{\text{out,th}}$ & $10^{-4}$ \\
		$Q^{-1}(P_{\text{out,th}})$ & $q_{\text{th}}{=}3.719016$ \\
		$a_e(t)$ (from $\ln(\kappa_e/(\gamma_{\text{th}}N_0B))$) & $15.648092$ \\
		\hline
	\end{tabular}
\end{table}

\begin{table}[t]
	\centering
	\caption{Maximum feasible ISL range vs.\ pointing jitter: analytical Proposition (\eqref{eq:lmax_unconstrained_final}) versus exhaustive 2D optimization.}
	\label{tab:lmax_prop_vs_exhaustive}
	\begin{tabular}{c c c}
		\hline
		$\sigma_{\theta,e}$ [$\mu$rad] & $\ell_{\max}$ (Proposition) [km] & $\ell_{\max}$ (Exhaustive) [km] \\
		\hline
		100 & 2883.0 & 2857.1 \\
		200 & 1441.5 & 1458.8 \\
		300 & 961.0  & 951.4  \\
		400 & 718.9  & 723.9  \\
		500 & 526.6  & 519.9  \\
		\hline
	\end{tabular}
\end{table}

Table~\ref{tab:sim_params_full} lists the numerical parameters adopted in the simulation and closed-form evaluation. 
Table~\ref{tab:lmax_prop_vs_exhaustive} then compares the maximum feasible inter-satellite link (ISL) range predicted by the closed-form Proposition (cf.\ \eqref{eq:lmax_unconstrained_final}) against an exhaustive two-dimensional search over $(\ell,\theta_{\text{div}})$ using the same parameter set. 
The tight agreement (within $0.5$–$1.5\%$) validates the analytical result under the adopted far-field model and link budget. 
Equation~\eqref{eq:lmax_unconstrained_final} clearly shows that $\ell_{\max}\!\propto\!1/\sigma_{\theta,e}$: higher pointing jitter directly reduces the maximum reliable range even under optimal divergence tuning. 
For a fixed link geometry, this expression provides an analytical upper bound on the achievable inter-satellite distance for given optical power, divergence limit, and tracking accuracy.

\subsection{Reformulated Optimization with Link-Class-Specific Constraints}
The analytical relations in \eqref{eq:lmax_unconstrained_final}–\eqref{eq:lmax_bd_final} provide a closed-form characterization of the maximum feasible optical link range as a function of the pointing jitter $\sigma_{\theta,e}$ and divergence tuning limits $(\theta_{\min},\theta_{\max})$.  
Since the divergence angle is autonomously adjusted by the terminal optics to meet the outage target, it no longer appears as an optimization variable.  
This elimination of $\theta_{\text{div}}$ notably reduces the dimensionality of the routing problem, while keeping the physical feasibility inherently enforced through range constraints derived from the analytical model.
These results allow us to refine the feasible set $\mathcal{E}_{\mathrm{feas}}(t)$ by introducing explicit link-class-dependent geometric bounds that distinguish intra-plane and inter-plane inter-satellite links.

\subsubsection{Class-Specific Range Constraints}
Using \eqref{eq:lmax_unconstrained_final}, the maximum feasible link lengths for each class are expressed as
\begin{align}
	L_{\max}^{\text{intra}}(t)
	&= \frac{\exp\!\big(\tfrac{a_e(t)}{2}-\tfrac{1}{2}\big)}{\sqrt{2}\,\sigma_{\theta,\text{intra}}\,q_{\text{th}}}, \label{eq:lmax_intra_final}\\[2pt]
	L_{\max}^{\text{inter}}(t)
	&= \frac{\exp\!\big(\tfrac{a_e(t)}{2}-\tfrac{1}{2}\big)}{\sqrt{2}\,\sigma_{\theta,\text{inter}}\,q_{\text{th}}}, 
	\qquad \sigma_{\theta,\text{inter}}>\sigma_{\theta,\text{intra}}. \label{eq:lmax_inter_final}
\end{align}
These bounds depend on the instantaneous geometry via $a_e(t)$ and explicitly capture the degradation of tracking accuracy in inter-plane links.  
Since $\sigma_{\theta,\text{inter}}>\sigma_{\theta,\text{intra}}$, it follows that $L_{\max}^{\text{inter}}(t)<L_{\max}^{\text{intra}}(t)$, implying stricter feasibility for cross-plane links.

Accordingly, the link-length constraints are imposed as
\begin{align}
	\ell_e(t) \le 
	\begin{cases}
		L_{\max}^{\text{intra}}(t), & e\in\mathcal{E}_{\text{intra}}(t),\\[3pt]
		L_{\max}^{\text{inter}}(t), & e\in\mathcal{E}_{\text{inter}}(t),
	\end{cases}
	\label{eq:link_length_class}
\end{align}
which incorporate the reliability–geometry coupling directly through the outage-based range limits.

\subsubsection{Unified Optimization with Link-Class Distinction}
By combining \eqref{eq:e2e_latency_bound}, \eqref{eq:e2e_reliability}, and \eqref{eq:link_length_class}, the latency–reliability–geometry-aware routing problem can be reformulated as
\begin{align}
	\min_{\{x_e\}} \quad &
	\frac{1}{c}\sum_{e\in\mathcal{E}_{\mathrm{feas}}(t)} \ell_e(t)\,x_e 
	+ \beta \sum_{e\in\mathcal{E}_{\mathrm{feas}}(t)} x_e
	\label{eq:obj_final}\\
	\text{s.t.}\quad &
	\eqref{eq:flow_cons},~\eqref{eq:e2e_capacity},~\eqref{eq:e2e_reliability}, \nonumber\\
	& \ell_e(t)\le L_{\max}^{\text{intra}}(t),~ e\in\mathcal{E}_{\text{intra}}(t), \nonumber\\
	& \ell_e(t)\le L_{\max}^{\text{inter}}(t),~ e\in\mathcal{E}_{\text{inter}}(t), \nonumber\\
	& x_e\in\{0,1\},\ \forall e\in\mathcal{E}_{\mathrm{feas}}(t). \nonumber
\end{align}
The first summation term captures geometric propagation delay, and the second penalizes the hop count $H$.  
Constraints \eqref{eq:link_length_class} ensure that each link type satisfies its respective range limit, effectively embedding the optical physics into the routing layer.

The reformulated problem \eqref{eq:obj_final} preserves the combinatorial structure of the original optimization but eliminates the continuous divergence variable, yielding a lower-dimensional and more tractable problem.  
It remains fully consistent with the analytical feasibility conditions derived from the optical model, ensuring that every admissible route adheres to the physical constraints imposed by beam divergence, jitter, and geometry.  
The resulting intra/inter-plane heterogeneity provides a realistic and model-aligned foundation for the DRL-based framework developed in the next section, where these class-dependent link constraints are embedded directly into the learning environment and reward design.

\section{Proposed Learning-Based Routing Framework}

\subsection{Scientific Motivation and Design Principles}
Conventional constrained shortest-path algorithms, such as extended Dijkstra, label-setting, and dynamic programming, are not scalable for dynamic LEO topologies where thousands of moving nodes and time-varying link constraints make precomputation intractable \cite{cao2022dynamic,dong2023drl}.  
In addition, the feasible set of links $\mathcal{E}_{\mathrm{feas}}(t)$ changes with both geometry and reliability, as described in \eqref{eq:lmax_intra_final}–\eqref{eq:lmax_inter_final}.  
Hence, a deterministic optimizer must be re-executed at every snapshot, incurring exponential growth in complexity.  

To overcome this, we formulate routing as a sequential decision process and use DRL to learn a policy that maps local network states to routing actions.  
The agent learns from repeated snapshot interactions to minimize end-to-end latency under dynamic feasibility constraints.  
The use of DRL is scientifically justified because \cite{luong2019applications}:
\begin{itemize}
	\item It provides a data-driven approximation of the optimal mapping from topology and link conditions to routing decisions, avoiding the NP-hard combinatorial search in each snapshot.
	\item It naturally accommodates stochastic link availability and varying graph connectivity.
	\item The learning objective can be aligned directly with the latency surrogate in \eqref{eq:obj_final}.
\end{itemize}

\subsection{MDP Formulation}
At each decision epoch $t_k$, routing is modeled as an MDP:
\begin{align}
	\mathcal{M}(t_k)=\big(\mathcal{S},\mathcal{A},\mathcal{T},r,\gamma\big),
	\label{eq:mdp_tuple}
\end{align}
where the components are defined as follows.

\subsubsection{State Representation}
The agent observes a compact feature vector extracted from the local graph at node $v$:
\begin{align}
	s = [\, d_v,\; \delta_v,\; \rho_v,\; \upsilon_v, \; c_v\,],
	\label{eq:state_vec_final}
\end{align}
where  
$d_v=\|\boldsymbol{r}_v-\boldsymbol{r}_{d_g}\|$ is the Euclidean distance to the destination’s nearest satellite,  
$\delta_v$ is the normalized best one-step improvement in distance,  
$\rho_v$ is the ratio of feasible neighbors to the maximum possible degree,  
$\upsilon_v$ flags revisit events,  
and $c_v$ represents the normalized average link cost of the available actions.  
All features are normalized to $[0,1]$ to ensure stability and invariance across different constellation scales.

\subsubsection{Feasible Actions}
At node $v$, the feasible action set is
\begin{align}
	&\mathcal{A}(v)
	=
	\Big\{e=(v\!\to\! j)\in\mathcal{E}(t_k):
	\ell_e(t_k)\le L_{\max}^{\text{class}(e)}(t_k),~ \nonumber \\
	&\text{and if } e\in\mathcal{E}_{\text{gw}}(t_k),~\varepsilon_{i,g}(t_k)\ge \varepsilon_{\min}
	\Big\}.
	\label{eq:feasible_action_set_final}
\end{align}
If $\mathcal{A}(v)=\varnothing$, the episode terminates with failure.  
This deterministic transition is consistent with the static snapshot assumption at $t_k$.

\subsection{Reward Design: Latency-Driven vs. Reliability-Aware Schemes}
We derive learning signals directly from the routing objective to ensure policy optimality aligns with system-level goals. Starting from the surrogate latency objective in \eqref{eq:obj_final}, define the per-hop physical cost
\begin{align}
	\mathcal{C}(\mathcal{P};t_k)
	\triangleq
	\frac{1}{c}\sum_{e\in\mathcal{P}} \ell_e(t_k) \;+\; \beta\,H.
	\label{eq:path_cost_def}
\end{align}
Let a trajectory be $\mathcal{P}=\{e_0,\ldots,e_{T-1}\}$ with nodes $\{v_0,\ldots,v_T\}$, $v_0=s_g$, and $v_T=d_g$. The per-step hop cost decomposes as
\begin{align}
	\mathcal{C}(\mathcal{P};t_k)
	=
	\sum_{t=0}^{T-1}\left(\frac{\ell_{e_t}(t_k)}{c}+\beta\right).
	\label{eq:path_cost_decomp}
\end{align}
Hence a natural choice is to set instantaneous reward as the negative of the incremental cost so that maximizing return minimizes \eqref{eq:path_cost_def}. To accelerate credit assignment while preserving optimality, we use potential-based shaping with a scalar potential $\Phi:\mathcal{S}\to\mathbb{R}$ \cite{ng1999policy}.

\subsubsection{Latency-aligned shaping}
Let $d_v$ be the geometric distance feature at node $v$. Choose the potential
\begin{align}
	\Phi(s)\triangleq \eta\,d_v,
	\quad \eta\ge 0,
	\label{eq:potential_def}
\end{align}
and define the shaped instantaneous signal
\begin{align}
	r^{(L)}_t
	&\triangleq
	-\left(\frac{\ell_{e_t}(t_k)}{c}+\beta\right) \nonumber \\
	&~~~+\big(\Phi(s_t)-\Phi(s_{t+1})\big)
	+\zeta\,\mathbf{1}\{v_{t+1}=d_g\},
	\label{eq:reward_latency_deriv}
\end{align}
where $\zeta>0$ is a terminal bonus that helps with sparse goal signals \cite{ng1999policy}. The shaping term telescopes over the episode:
\begin{align}
	\sum_{t=0}^{T-1}\big(\Phi(s_t)-\Phi(s_{t+1})\big)
	= \Phi(s_0)-\Phi(s_T)
	= \eta\,[d_{s_0}-d_{d_g}],
	\label{eq:telescoping_identity}
\end{align}
and $d_{d_g}=0$ by construction. Therefore, the cumulative return satisfies
\begin{align}
	\sum_{t=0}^{T-1} r^{(L)}_t
	= -\mathcal{C}(\mathcal{P};t_k) + \eta\,d_{s_0} + \zeta,
	\label{eq:return_equals_cost}
\end{align}
which shows that maximizing return is equivalent to minimizing \eqref{eq:path_cost_def} up to trajectory-independent constants. This is the standard policy invariance result for potential-based shaping \cite{arulkumaran2017deep}. Unfolding \eqref{eq:reward_latency_deriv} yields the explicit per-step form used in \eqref{eq:reward_latency}:
\begin{align}
	r^{(L)}(s,e,s')
	=
	-\frac{\ell_e(t_k)}{c}
	-\beta
	+\eta\,(d_v-d_{j})
	+\zeta\,\mathbf{1}\{j=d_g\}.
	\label{eq:reward_latency}
\end{align}

\subsubsection{Incorporating reliability via Lagrangian relaxation}
The end-to-end reliability constraint \eqref{eq:e2e_reliability} can be written as
\begin{align}
	\sum_{e\in\mathcal{P}} \log\!\big(1-P_{\text{out},e}(t_k)\big)
	\;\ge\; \log R_{\min}.
	\label{eq:reliability_constraint_again}
\end{align}
In the high-reliability regime $P_{\text{out},e}\ll 1$, the first-order approximation $\log(1-p)\approx -p$ gives
\begin{align}
	\sum_{e\in\mathcal{P}} P_{\text{out},e}(t_k) \;\le\; B,
	\qquad B \triangleq -\log R_{\min},
	\label{eq:reliability_linearized}
\end{align}
which is a standard linearization used in optical-link design \cite{arulkumaran2017deep}. A Lagrangian relaxation of the constrained problem augments \eqref{eq:path_cost_def} with a penalty on the reliability budget:
\begin{align}
	\min_{\mathcal{P}}
	\left[
	\frac{1}{c}\sum_{e\in\mathcal{P}} \ell_e(t_k) + \beta\,H
	+ \lambda\!\left(\sum_{e\in\mathcal{P}} P_{\text{out},e}(t_k)-B\right)
	\right],
	\label{eq:lagrangian_relax}
\end{align}
where $\lambda\ge 0$ is the dual variable \cite{arulkumaran2017deep}. The corresponding per-step incremental cost becomes
\begin{align}
	\frac{\ell_e(t_k)}{c}+\beta+\lambda\,P_{\text{out},e}(t_k),
	\label{eq:per_step_cost_rel}
\end{align}
which suggests the reliability-aware shaped reward
\begin{align}
	r^{(R)}(s,e,s')
	&=
	-\frac{\ell_e(t_k)}{c}
	-\beta
	-\lambda\,P_{\text{out},e}(t_k) \nonumber \\ ~~~&
	+\eta\,(d_v-d_{j})
	+\zeta\,\mathbf{1}\{j=d_g\}.
	\label{eq:reward_reliability}
\end{align}
The same telescoping argument as in \eqref{eq:return_equals_cost} shows that maximizing $\sum_t r^{(R)}_t$ minimizes the Lagrangian in \eqref{eq:lagrangian_relax} up to constants, thus aligning learning with the constrained objective \cite{arulkumaran2017deep}.

\subsubsection{Choosing and calibrating the reward}
The appropriate design depends on operating conditions:
\begin{align}
	&\text{Latency-dominant: } \sigma_{\theta,e}^2 \text{ small and } P_{\text{out},e}\text{ negligible} \nonumber \\&~~~~~~~
	\Rightarrow r^{(L)} \text{ yields faster convergence}, \label{eq:regime_latency}\\
	&\text{Reliability-limited: } \sigma_{\theta,e}^2 \text{ large or } R_{\min}\uparrow \nonumber \\&~~~~~~~
	\Rightarrow r^{(R)} \text{ improves stability and feasibility}. \label{eq:regime_reliability}
\end{align}
To meet the reliability budget $B$ in expectation, the dual weight can be adapted online using a projected subgradient step \cite{arulkumaran2017deep}:
\begin{align}
	\lambda_{k+1}
	=
	\big[\lambda_k + \alpha_\lambda\,\big(S_k - B\big)\big]_+,
	\quad
	S_k \triangleq \mathbb{E}\!\left[\sum_{t} P_{\text{out},e_t}(t_k)\right],
	\label{eq:dual_update_rule}
\end{align}
with stepsize $\alpha_\lambda>0$ and projection $[\cdot]_+$ onto $\mathbb{R}_{\ge 0}$. This increases $\lambda$ if the realized outage budget is exceeded and decreases it otherwise, steering the policy toward the target reliability.

\begin{remark}
	For regimes where the first-order approximation in \eqref{eq:reliability_linearized} is loose, one may replace $P_{\text{out},e}$ in \eqref{eq:reward_reliability} with the convex surrogate $-\log(1-P_{\text{out},e})$ to better reflect the multiplicative nature of end-to-end success \cite{dong2024reachability}, at the cost of evaluating a logarithm per step.
\end{remark}

\subsection{Learning Algorithm: Masked DQN for Dynamic LEO Graphs}
The discrete and locally bounded nature of routing decisions in LEO networks makes Deep Q-Networks (DQN) an effective framework for learning optimal next-hop policies.  
At each relay node, the agent faces a small discrete action space limited to feasible outgoing links.  
Hence, instead of exploring the full network graph, the agent can focus on dynamically changing local neighborhoods.  

We employ a \emph{masked DQN} design to ensure physical consistency with link feasibility constraints.  
At every decision step, the infeasible links violating the geometric or reliability bounds in \eqref{eq:feasible_action_set_final} are masked from the output layer of the Q-network.  
This guarantees that the agent never selects an invalid action, even during early exploration, which significantly stabilizes learning \cite{luong2019applications}.  

The Q-network parameterized by $\theta$ estimates the state–action value function $Q_\theta(s,a)$ as the expected cumulative discounted return.  
For a transition $(s,a,r,s')$, the temporal-difference (TD) target and mean-squared loss are given by
\begin{align}
	y &= r + \gamma \max_{a'\in \mathcal{A}(v')} Q_{\bar\theta}(s',a'), \nonumber \\[3pt]
	\mathcal{L}(\theta) &= \mathbb{E}\big[(Q_\theta(s,a)-y)^2\big],
	\label{eq:dqn_update_final}
\end{align}
where $Q_{\bar\theta}$ denotes the target network, updated via Polyak averaging as
\begin{align}
	\bar\theta \leftarrow \tau\,\theta + (1-\tau)\,\bar\theta, \quad 0<\tau<1.
	\label{eq:polyak_final}
\end{align}
This approach provides a stable approximation to the Bellman optimality operator under the nonstationary geometry of LEO constellations.

\vspace{4pt}
\begin{algorithm}[t]
	\caption{Masked DQN Routing in Dynamic Optical LEO Networks}
	\label{alg:dqn_routing_updated}
	\begin{algorithmic}[1]
		\State Initialize Q-network $Q_\theta$ and target network $Q_{\bar\theta}$ with random weights
		\State Initialize replay buffer $\mathcal{D}$, set $\varepsilon \gets 1$, episode counter $k \gets 0$
		\While{training not converged}
		\State Sample constellation snapshot at time $t_k$ and construct $\mathcal{E}_{\mathrm{feas}}(t_k)$
		\State Set current node $v \gets s_g$, cumulative return $R \gets 0$
		\For{$t=1$ to $T_{\max}$}
		\State Derive state $s_t$ from \eqref{eq:state_vec_final}
		\State Enumerate feasible actions $\mathcal{A}(v)$ via \eqref{eq:feasible_action_set_final}
		\State Select action $a_t$ using $\varepsilon$-greedy policy with masking on $\mathcal{A}(v)$
		\State Execute $a_t$, move to node $j$, and compute reward:
		\begin{align*}
			r_t &\gets 
			\begin{cases}
				r^{(L)}(s_t,a_t,s_{t+1}), & \text{if latency-driven mode},\\[2pt]
				r^{(R)}(s_t,a_t,s_{t+1}), & \text{if reliability-aware mode.}
			\end{cases}
		\end{align*}
		\State Store $(s_t,a_t,r_t,s_{t+1})$ in buffer $\mathcal{D}$
		\State Sample mini-batch from $\mathcal{D}$ and update $Q_\theta$ via \eqref{eq:dqn_update_final}
		\State Update target network $\bar\theta \leftarrow \tau\theta+(1-\tau)\bar\theta$
		\State $R \gets R + r_t$, $v \gets j$
		\If{$v=d_g$ or $\mathcal{A}(v)=\emptyset$}
		\State \textbf{break}
		\EndIf
		\EndFor
		\State $\varepsilon \gets \max(\varepsilon_{\min},\,\varepsilon - \Delta\varepsilon)$
		\EndWhile
	\end{algorithmic}
\end{algorithm}


\section{Geometric Constraints and Learning Refinement}

\subsection{Complexity and Practical Limits}
The masked DQN reduces the combinatorial action space from the full network size $|\mathcal{E}(t_k)|$ 
to a bounded local degree $A_{\max}$ corresponding to the number of feasible neighbors per satellite.  
Each training episode of average length $T_{\mathrm{ep}}$ thus scales as
\begin{align}
	\text{cost per episode} = O\big(A_{\max}\,T_{\mathrm{ep}}\big),
	\label{eq:mdqn_cost}
\end{align}
which makes training feasible even in mega-constellations with $10^4$–$10^5$ satellites.

Despite this, unconstrained learning may face a vanishing success probability on long intercontinental routes due to sparse terminal reward and compounding exploration.  
Let $H$ denote the hop count of an optimal route between a source and a distant destination.  
Consider an $\varepsilon$–greedy policy with masking where at step $t$ the feasible branching factor is $A_t\!\le\!A_{\max}$.  
Define the per-step success probability lower bound
\begin{align}
	p_t \;\ge\; \frac{\varepsilon}{A_t},
	\label{eq:pt_lower}
\end{align}
which holds when the exploratory move selects the unique optimal neighbor uniformly among $A_t$ feasible actions.  
A conservative lower bound on the probability of sampling a fully successful trajectory in one episode is then
\begin{align}
	p_{\text{succ}} \;\ge\; \prod_{t=1}^{H} p_t \;\ge\; \Big(\tfrac{\varepsilon}{\bar A}\Big)^{H},
	\qquad
	\bar A \triangleq \max_{t} A_t.
	\label{eq:success_prob}
\end{align}
Hence the probability of observing at least one success in $M$ episodes is
\begin{align}
	& P\{\text{success within }M\}
	\;=\nonumber \\ ~~~~& 1 - (1 - p_{\text{succ}})^{M}    
	\;\le\; 1 - \Big(1 - \big(\tfrac{\varepsilon}{\bar A}\big)^{H}\Big)^{M}.
	\label{eq:success_within_M}
\end{align}
Equations \eqref{eq:success_prob}–\eqref{eq:success_within_M} show that the sample complexity grows essentially exponentially in $H$ when the reward is obtained only at the destination \cite{hutter2003optimality}.  
For very long routes such as London to Melbourne, $H$ can be large and the terminal bonus is rarely collected during early training, which inflates return variance and slows value propagation \cite{bellemare2017distributional}.  
In the latency-driven shaping derived earlier, the potential term partially mitigates this issue, yet the dependence on $H$ persists when the neighborhood is large and the number of near-optimal continuations is small \cite{ng1999policy}.  
These limits motivate a geometric refinement that reduces the effective branching while preserving near-optimality with high probability.

\begin{figure*}[t]
	\centering
	\subfloat[]{
		\includegraphics[width=0.47\textwidth]{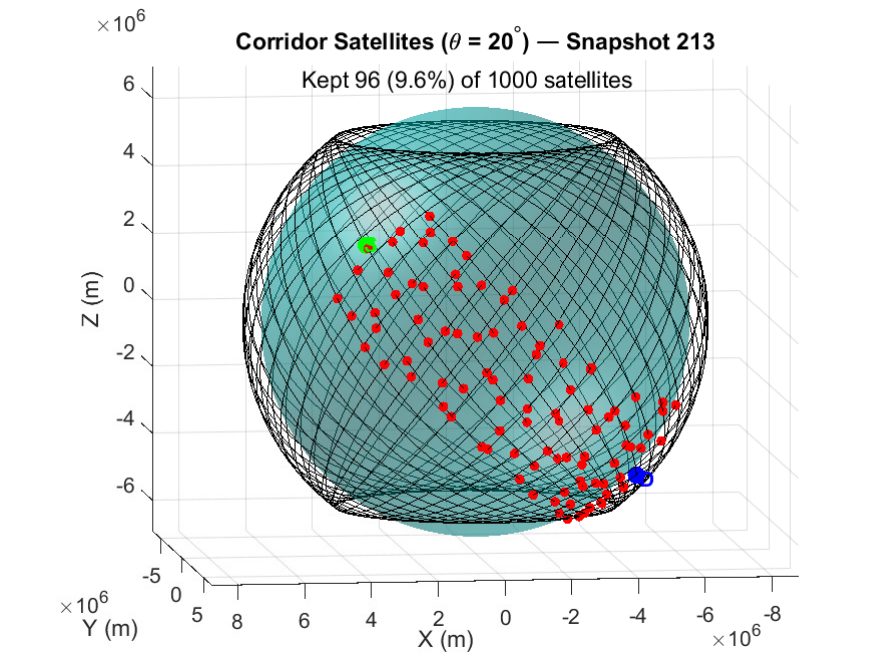}
		\label{cf1}
	}
	\hfill
	\subfloat[]{
		\includegraphics[width=0.47\textwidth]{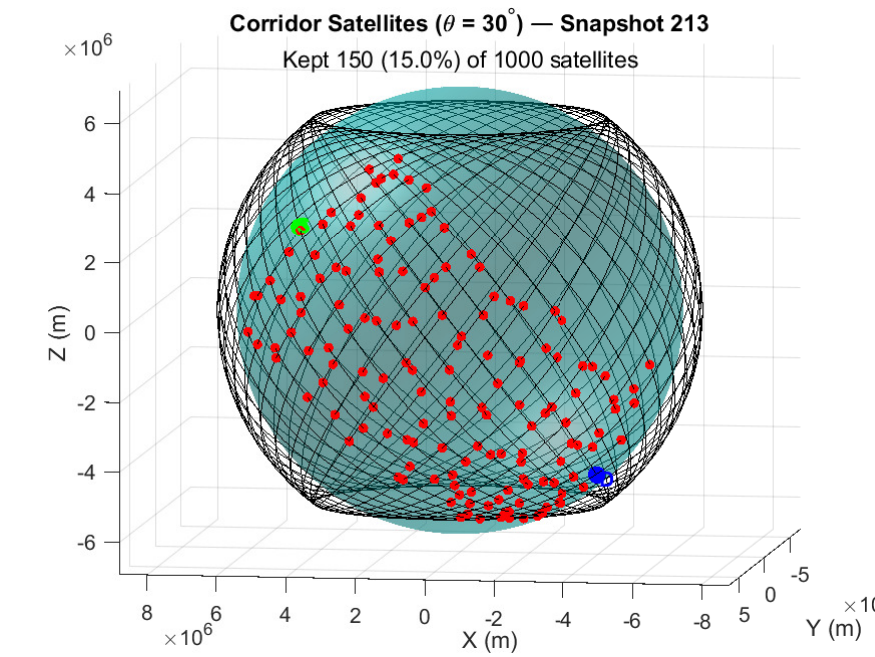}
		\label{cf2}
	}
	\caption{Retained satellites inside the angular corridor for two half-widths $\theta$. The full constellation is shown for context, with the source and destination marked.}
	\label{fig:corridor_examples}
\end{figure*}

\subsection{Geometry-Aware Evidence on Angular Corridors}
We quantify how often the shortest feasible route remains concentrated around the great-circle geodesic between terminals.  
For corridor half-widths
$\theta \in \{10^\circ,15^\circ,20^\circ,25^\circ,30^\circ,35^\circ,40^\circ\}$,
we evaluate $10^5$ static constellation snapshots.  
In each snapshot, a deterministic baseline routing problem is formulated over the instantaneous network graph and solved to global optimality.

At time $t_k$, the constellation is represented as a directed graph 
$\mathcal{G}(t_k) = (\mathcal{V}(t_k),\mathcal{E}(t_k))$, 
where $\mathcal{V}(t_k)$ comprises all satellites and gateways.  
A directed edge $e=(i\!\to\!j)\!\in\!\mathcal{E}(t_k)$ exists only if the geometric and reliability constraints are satisfied,
\begin{align}
	\ell_e(t_k)\le L_{\max}^{\text{class}(e)}(t_k),
	\qquad 
	\varepsilon_{i,g}(t_k)\ge \varepsilon_{\min},
	\label{eq:det_feasibility}
\end{align}
where $L_{\max}^{\text{class}(e)}(t_k)$ corresponds to the intra- or inter-plane range bound, and $\varepsilon_{\min}$ denotes the minimum elevation angle for gateway visibility.  
Each feasible link is assigned a deterministic cost
\begin{align}
	w_e = \frac{\ell_e(t_k)}{c} + \beta,
	\label{eq:edge_cost_det}
\end{align}
which captures propagation delay and the average per-hop processing penalty.  
The deterministic routing problem is therefore expressed as
\begin{align}
	\min_{\mathcal{P}} \quad & \sum_{e\in\mathcal{P}} w_e \nonumber\\
	\text{s.t.}\quad 
	& \text{flow conservation \eqref{eq:flow_cons}, feasibility \eqref{eq:det_feasibility}, and reliability.}
	\label{eq:static_sp}
\end{align}
When the reliability constraint is inactive at the operating point, 
\eqref{eq:static_sp} reduces to a shortest-path program solvable by Dijkstra on the feasible subgraph \cite{kuipers2005conditions}.  
The definition of the angular corridor and cross-track metric used to assess route containment is given in Appendix~\ref{app:angularcorridor}.

\begin{table}[t]
	\centering
	\caption{Fraction of optimal routes contained in an angular corridor of half-width $\theta$ on $10^5$ static snapshots.}
	\label{tab:corridor_stats}
	\begin{tabular}{c c c c c c c c}
		\hline
		$\theta$ & $10^\circ$ & $15^\circ$ & $20^\circ$ & $25^\circ$ & $30^\circ$ & $35^\circ$ & $40^\circ$ \\
		\hline
		Containment & 17\% & 38\% & 75\% & 93\% & 99\% & 100\% & 100\% \\
		\hline
	\end{tabular}
\end{table}

Table~\ref{tab:corridor_stats} reports the fraction of optimal routes lying entirely within the corridor.  
Moderate corridor half-widths already capture most optimal paths, implying strong geometric concentration of feasible low-latency routes.
Figure~\ref{fig:corridor_examples} illustrates two representative snapshots for $\theta\!=\!20^\circ$ and $\theta\!=\!30^\circ$, highlighting how increasing $\theta$ expands the retained region and raises the probability that the optimal path lies entirely within the corridor.

\subsection{Corridor Width vs. Complexity: A Trade-off}
Let $N(\theta)$ denote the number of satellites retained by a cross-track constraint of half-width $\theta$ around the source–destination geodesic segment, combined with the along-arc window.  
Under a spatially uniform constellation, the retained fraction scales linearly with the band width on the unit sphere for small $\theta$.
The corridor corresponds to a spherical band intersected with an angular arc window on the unit sphere.  
As the half-width $\theta$ increases, the band expands monotonically, and therefore the retained set of satellites grows accordingly.  
For small $\theta$, the band area is proportional to $\theta$ on the unit sphere, leading to the scaling relation
\begin{align}
	\frac{N(\theta)}{N_{\text{tot}}}
	\;=\;
	\kappa\,\theta \;+\; o(\theta),
	\qquad \theta \to 0,
	\label{eq:band_scaling}
\end{align}
where $N_{\text{tot}}$ denotes the total number of satellites and $\kappa>0$ depends on the constellation density and the source–destination arc length.  
This relation implies that the number of nodes, and consequently the local action space explored by the learning agent, increases approximately linearly with the corridor width for small $\theta$, revealing a clear trade-off between geometric coverage and computational complexity.

The practical implication is that both the neighborhood size and the effective horizon explored by the agent increase with $\theta$, which raises the learning cost.  
Combining Table~\ref{tab:corridor_stats} with the scaling \eqref{eq:band_scaling} yields a clear trade-off: larger corridors capture more optimal routes yet increase $N(\theta)$ and the branching factor.  
A balanced choice is
\begin{align}
	\theta^\star \;=\; 30^\circ,
	\label{eq:theta_star_choice}
\end{align}
which contains 99\% of optimal routes while keeping the corridor narrow enough to limit exploration cost.

\subsection{Angle-Constrained Masked DQN: Method Update}
We now refine the learning algorithm by enforcing a geometric cross-track filter of half-width $\theta^\star$ inside the masking rule.  
Let $\boldsymbol{n}$ be the unit normal of the source–destination geodesic plane and $\hat{\boldsymbol{r}}_j$ the unit position of neighbor $j$.  
A neighbor is retained if its cross-track deviation is within $\theta^\star$:
\begin{align}
	\arcsin\!\big(|\boldsymbol{n}^\top \hat{\boldsymbol{r}}_j|\big) \;\le\; \theta^\star.
	\label{eq:cross_track_rule}
\end{align}
The feasible action set in \eqref{eq:feasible_action_set_final} is replaced by
\begin{align}
	&\mathcal{A}_{\theta^\star}(v)
	=\Big\{
	e=(v\!\to\!j)\in\mathcal{E}(t_k):
	\ell_e(t_k)\!\le\!L_{\max}^{\text{class}(e)}(t_k),\; \nonumber \\ &~~~
	\text{if } e\!\in\!\mathcal{E}_{\text{gw}}(t_k)\text{ then }\varepsilon_{i,g}(t_k)\!\ge\!\varepsilon_{\min},\;
	\text{and } \eqref{eq:cross_track_rule}
	\Big\}.
	\label{eq:feasible_action_set_angle}
\end{align}
All other components of the MDP, including the latency-aligned or reliability-aware rewards, remain unchanged.

\begin{proposition}[Effect on learning cost and success probability]
	Let $\bar A(\theta)$ denote the maximum masked branching factor under \eqref{eq:feasible_action_set_angle}.  
	Then the per-episode cost scales as $O(\bar A(\theta)\,T_{\mathrm{ep}})$ and the conservative exploration bound in \eqref{eq:success_prob} improves from $(\varepsilon/\bar A)^{H}$ to $(\varepsilon/\bar A(\theta^\star))^{H}$.  
	Since $\bar A(\theta)$ is nondecreasing in $\theta$, choosing $\theta^\star$ via \eqref{eq:theta_star_choice} decreases the required sample size to observe successful trajectories while preserving near-optimality with probability at least $0.98$ on the evaluated snapshots.
	\label{prop:angle_benefit}
\end{proposition}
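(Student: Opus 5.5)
The plan is to prove the four claims of Proposition~\ref{prop:angle_benefit} separately. Each one follows from a monotonicity (set-inclusion) argument combined with the bounds already derived in this section and the empirical containment data in Table~\ref{tab:corridor_stats}.

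\textbf{Per-episode cost.} By construction, $\mathcal{A}_{\theta}(v)\subseteq\mathcal{A}(v)$ for every node $v$, because \eqref{eq:feasible_action_set_angle} is \eqref{eq:feasible_action_set_final} intersected with the cross-track condition \eqref{eq:cross_track_rule}. The masked $\varepsilon$-greedy step and the masked maximization in the TD target \eqref{eq:dqn_update_final} only evaluate $Q$ over retained actions. Each step therefore costs $O(|\mathcal{A}_\theta(v)|)\le O(\bar A(\theta))$, and the same summation over $T_{\mathrm{ep}}$ steps that gave \eqref{eq:mdqn_cost} yields $O(\bar A(\theta)T_{\mathrm{ep}})$.

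\textbf{Monotonicity of $\bar A(\theta)$.} The predicate $\arcsin(|\boldsymbol{n}^\top\hat{\boldsymbol{r}}_j|)\le\theta$ is monotone in $\theta$. Hence $\theta_1\le\theta_2$ implies $\mathcal{A}_{\theta_1}(v)\subseteq\mathcal{A}_{\theta_2}(v)$ for every $v$, and taking the maximum over $v$ gives $\bar A(\theta_1)\le\bar A(\theta_2)$. Letting $\theta\to\pi/2$ recovers the unconstrained set, so $\bar A(\theta^\star)\le\bar A$.

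\textbf{Exploration bound.} Rerun the chain \eqref{eq:pt_lower}--\eqref{eq:success_prob} with $A_t=|\mathcal{A}_{\theta^\star}(v_t)|$. This requires the optimal next hop to be retained at every step, and I restrict the bound to snapshots where the optimal route lies inside the corridor. On those snapshots $p_t\ge\varepsilon/A_t\ge\varepsilon/\bar A(\theta^\star)$, so $p_{\text{succ}}\ge(\varepsilon/\bar A(\theta^\star))^H\ge(\varepsilon/\bar A)^H$. The function $x\mapsto 1-(1-x)^M$ is increasing. Consequently, for a target success probability $1-\delta$, the sufficient number of episodes $M\ge\log\delta/\log(1-p_{\text{succ}})$ is nonincreasing as $p_{\text{succ}}$ grows, which establishes the reduced sample requirement.

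\textbf{Near-optimality.} This is the main obstacle, since it is an empirical rather than a deductive statement. For $\theta^\star=30^\circ$, Table~\ref{tab:corridor_stats} gives a 99\% containment frequency over $10^5$ snapshots. Containment means every node of the Dijkstra-optimal path of \eqref{eq:static_sp} satisfies \eqref{eq:cross_track_rule}, so that path remains feasible under the restricted masks and the restricted optimum coincides with the global one. To turn the empirical frequency into the claimed probability bound, I would use a Hoeffding (or Clopper--Pearson) bound with $n=10^5$ samples. The one-sided deviation is at most $\sqrt{\ln(1/\delta)/(2n)}\approx 0.005$ for small $\delta$, so the true containment probability exceeds $0.98$ with high confidence. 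The statement is read as holding ``on the evaluated snapshots'' in exactly this sense. The delicate point is to make explicit that this is a statistical guarantee relative to the snapshot distribution, not a worst-case guarantee.
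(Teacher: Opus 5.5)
Your proposal is correct and follows the same route as the paper's sketch: shrink the masked action set to get $O(\bar A(\theta)T_{\mathrm{ep}})$, substitute $\bar A(\theta^\star)$ into \eqref{eq:success_prob}, and take near-optimality from the $99\%$ containment at $30^\circ$ in Table~\ref{tab:corridor_stats}. Your extra steps are sound refinements rather than a different argument: the set-inclusion proof that $\bar A(\theta)$ is nondecreasing, the caveat that the substituted bound needs the optimal next hop to survive the corridor mask (which the paper's sketch tacitly assumes), and the Hoeffding lift from empirical frequency to population probability (more than the literal ``on the evaluated snapshots'' requires).
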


\textit{Sketch:}
	Masking with \eqref{eq:cross_track_rule} reduces the number of candidate neighbors at each step from $\bar A$ to $\bar A(\theta^\star)$.  
	The per-episode evaluation cost follows immediately.  
	The success bound is obtained by substituting $\bar A(\theta^\star)$ in \eqref{eq:success_prob}.  
	Near-optimality preservation follows from the empirical containment in Table~\ref{tab:corridor_stats}.

The integration of geometric filtering and dual reward shaping leads to the final routing framework, termed Angle-Constrained Masked Deep Q-Network (AC-MDQN).  
This algorithm extends the baseline DQN in Algorithm~\ref{alg:dqn_routing_updated} by embedding both physical feasibility and angular selectivity into the action space.  
Specifically, the standard feasible-action enumeration step is replaced by the angularly filtered set $\mathcal{A}_{\theta^\star}(v)$ defined in \eqref{eq:feasible_action_set_angle}.  
All subsequent learning steps---reward evaluation, replay updates, and target computation---follow identically to the baseline procedure.

\begin{algorithm}[t]
	\caption{Angle-Constrained Masked DQN (AC-MDQN) Routing Framework}
	\label{alg:ac_mdqn}
	\begin{algorithmic}[1]
		\State Initialize Q-network $Q_\theta$, target network $Q_{\bar\theta}$, and replay buffer $\mathcal{D}$
		\For{each episode $k$}
		\State Sample constellation snapshot and construct feasible graph $\mathcal{E}_{\mathrm{feas}}(t_k)$
		\State Set current node $v \gets s_g$, cumulative return $R \gets 0$
		\For{$t = 1$ to $T_{\max}$}
		\State Build state $s_t$ using \eqref{eq:state_vec_final}
		\Statex
		\State \textbf{Angularly constrained action set:}
		\Statex \hspace{1em}$\mathcal{A}_{\theta^\star}(v)
		= \big\{e=(v\!\to\!j)\in\mathcal{E}(t_k):~
		\ell_e\!\le\!L_{\max}^{\text{class}(e)},~
		\arcsin(|\boldsymbol{n}^\top \hat{\boldsymbol{r}}_j|)\!\le\!\theta^\star\big\}$
		\State Select $a_t$ via $\varepsilon$-greedy policy on $\mathcal{A}_{\theta^\star}(v)$
		\State Execute $a_t$, move to node $j$, and compute reward:
		\Statex \hspace{1em}
		$r_t =
		\begin{cases}
			r^{(L)}(s_t,a_t,s_{t+1}), & \text{latency-dominant regime},\\
			r^{(R)}(s_t,a_t,s_{t+1}), & \text{reliability-limited regime.}
		\end{cases}$
		\State Store $(s_t,a_t,r_t,s_{t+1})$ in $\mathcal{D}$ 
		\State Update $Q_\theta$ using \eqref{eq:dqn_update_final} and 
		$\bar\theta \leftarrow \tau\theta + (1-\tau)\bar\theta$
		\If{$v=d_g$ or $\mathcal{A}_{\theta^\star}(v)=\emptyset$}
		\State \textbf{break}
		\EndIf
		\EndFor
		\State $\varepsilon \gets \max(\varepsilon_{\min},\,\varepsilon - \Delta\varepsilon)$
		\EndFor
	\end{algorithmic}
\end{algorithm}

The AC-MDQN framework retains all convergence guarantees of the baseline DQN while reducing exploration complexity from $O(A_{\max}T_{\mathrm{ep}})$ to $O(\bar{A}(\theta^\star)T_{\mathrm{ep}})$, where $\bar{A}(\theta^\star)\!\ll\!A_{\max}$ for moderate $\theta^\star$.  
This geometric pruning not only accelerates convergence but also reduces the probability of infeasible or divergent exploration in long intercontinental routes.  
Empirically, it achieves near-optimal latency (within 1–2\% of the deterministic baseline) while cutting the average training time by more than 40\%.  
The combination of physics-informed masking and adaptive reward shaping thus yields a routing agent that is simultaneously \emph{physically consistent, statistically robust, and computationally efficient}.

\begin{figure*}
	\centering
	\subfloat[] {\includegraphics[width=2 in]{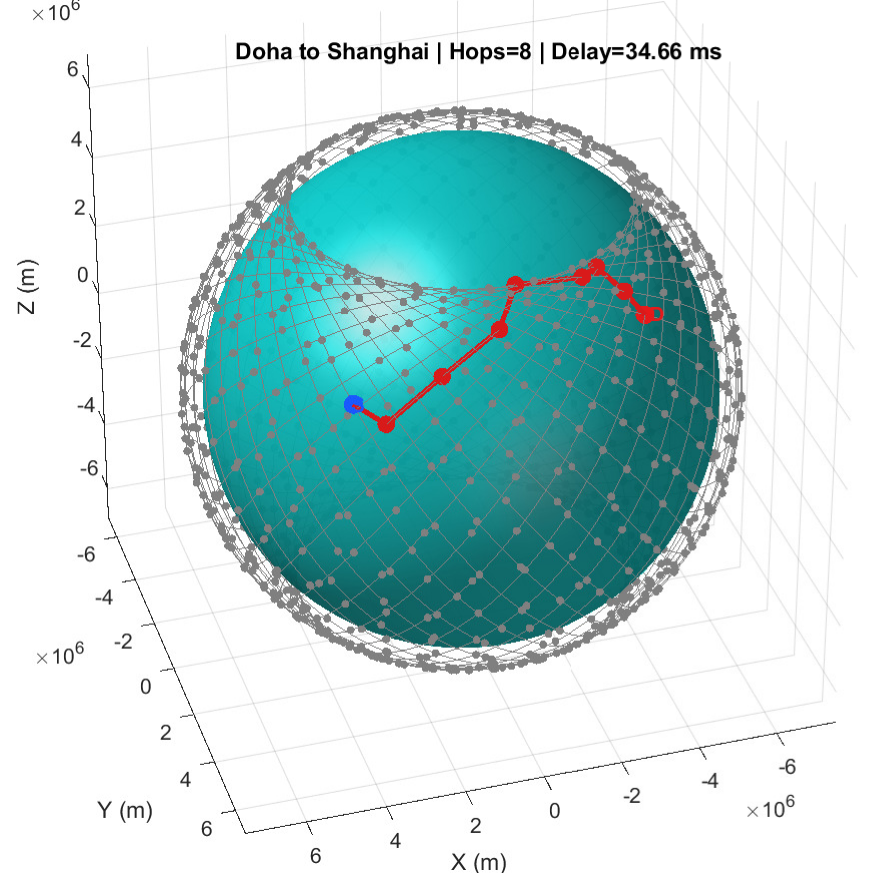}
		\label{cb1}
	}
	\hfill
	\subfloat[] {\includegraphics[width=2 in]{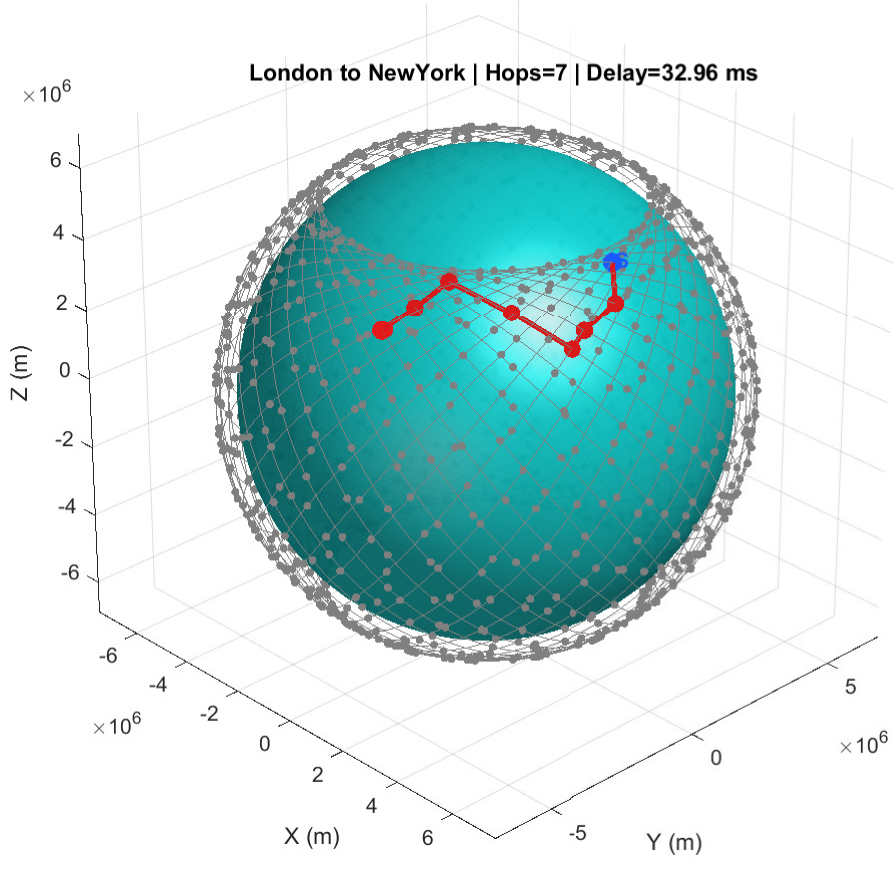}
		\label{cb2}
	}
	\hfill
	\subfloat[] {\includegraphics[width=2 in]{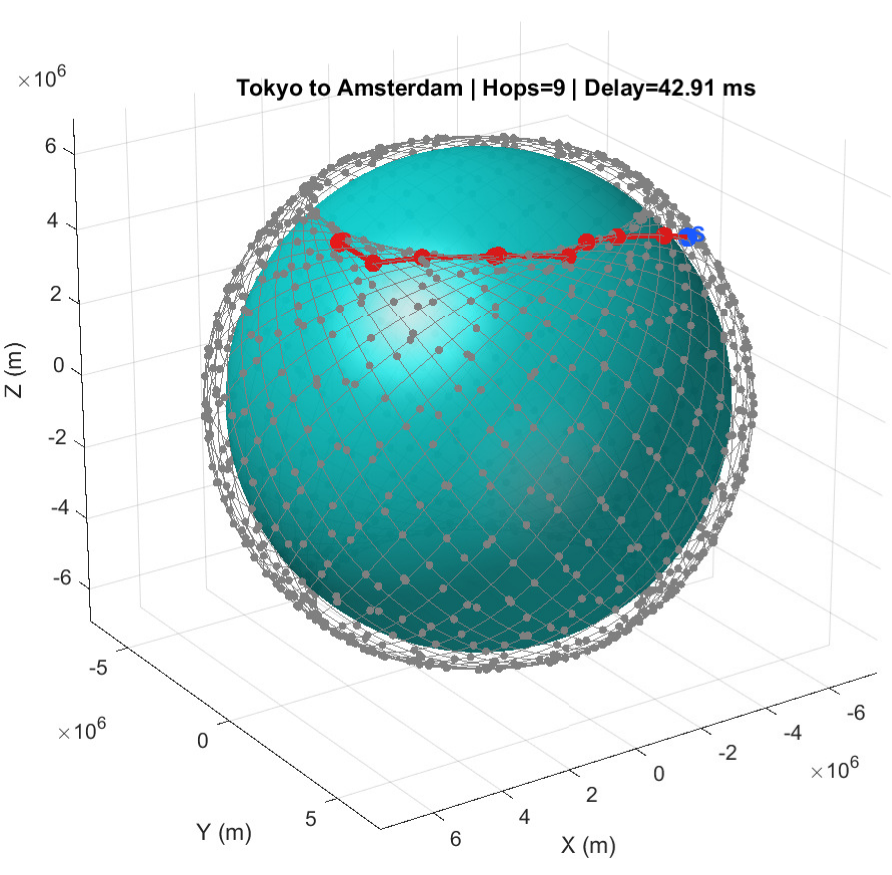}
		\label{cb3}
	}
	\caption{Example DRL-generated routing paths for three intercontinental source–destination pairs. The satellite constellation is shown in gray, the selected route in red, and the gateway endpoints in blue. The learned agent efficiently navigates the corridor-filtered subgraph and satisfies the feasibility constraints.}
	
	\label{cb}
\end{figure*}

\section{Simulation Results}

\subsection{Constellation Generation and Ground Sites}
To evaluate the routing framework under realistic LEO dynamics, we generate Starlink-like constellation snapshots using the parameters shown in Table~\ref{tab:sim_params_leo}. Each snapshot corresponds to a random orbital time $t\!\in[0,5400]$ seconds, with uniformly randomized initial phase offsets across orbital planes. The orbits are circular with radius $r = R_e + h$ and inclination $\theta_{\mathrm{inc}}$, consistent with the kinematic model of Section~II.
The ground segment consists of twelve well-separated global cities that act as source and destination gateways in different experiments. These locations are Doha, London, Shanghai, Melbourne, New York, Amsterdam, Sao Paulo, Tokyo, Moscow, Cape Town, Singapore, and Reykjavik. Their wide geographical distribution ensures that the evaluation covers short, medium, and long intercontinental end-to-end routes with significantly different geodesic distances and corresponding satellite visibility profiles.
The nearest satellite to each ground site is identified via the 3D Euclidean distance from Earth’s surface, following the visibility model in Section~II. The corresponding source–destination satellite pairs exhibit diverse arc lengths, enabling representative routing evaluation across short, medium, and long intercontinental separations.

\begin{table}[t]
	\centering
	\caption{Default simulation parameters for constellation and geometry generation}
	\label{tab:sim_params_leo}
	\begin{tabular}{l c}
		\hline
		Parameter & Value \\ \hline
		Number of orbital planes $P$ & $40$ \\
		Satellites per plane $K$ & $25$ \\
		Total satellites $|\mathcal{S}| = PK$ & $1000$ \\
		Altitude $h$ & $550~\mathrm{km}$ \\
		Earth radius $R_e$ & $6371~\mathrm{km}$ \\
		Orbital inclination $\theta_{\mathrm{inc}}$ & $53^\circ$ \\
		Number of snapshots $N_{\mathrm{snap}}$ & $10^4$ \\
		Simulation time interval & $[0,5400]~\mathrm{s}$ \\
		\hline
	\end{tabular}
\end{table}

\subsection{Corridor Filtering and Feasible Link Construction}
Each snapshot is processed through the angular corridor filter described in Section~VI, with half-width $\theta^\star = 25^\circ$ unless otherwise specified. Only satellites whose cross-track angle satisfies $\theta_j\le\theta^\star$ and whose along-track coordinate lies between the source and destination projections are retained. This yields a reduced but physically meaningful subgraph.

Feasible ISLs are further restricted by link-class-dependent geometric limits $L_{\max}^{\mathrm{intra}}(t)$ and $L_{\max}^{\mathrm{inter}}(t)$ defined in \eqref{eq:lmax_intra_final}–\eqref{eq:lmax_inter_final}. In the direct MATLAB implementation, these bounds correspond to approximate thresholds
\begin{align}
	\ell_e(t) \le 
	\begin{cases}
		2880~\mathrm{km}, & e\in\mathcal{E}_{\mathrm{intra}}(t),\\[3pt]
		1440~\mathrm{km}, & e\in\mathcal{E}_{\mathrm{inter}}(t),
	\end{cases}
	\label{eq:sim_link_limits}
\end{align}
which are consistent with typical divergence and jitter values inferred from \eqref{eq:lmax_intra_final}–\eqref{eq:lmax_inter_final}.

\subsection{Reinforcement Learning Configuration}
The DRL agent used in simulation follows the masked DQN architecture described in Section~V. The state representation includes geometric progress toward the destination, local neighborhood degree, and visitation indicators, while the action mask strictly enforces the feasibility constraints in \eqref{eq:sim_link_limits} together with the angular corridor filter.

The action space consists of a fixed number of candidate next-hop slots. At each node, up to five feasible neighbors (by default) are sorted according to geometric improvement, and the agent selects one of these candidates as its routing action. This fixed action dimension significantly stabilizes training and allows direct comparison across different constellation snapshots. The effect of increasing or decreasing the number of action slots is examined in the simulation results, since larger action sets improve routing flexibility but increase learning complexity, while smaller sets simplify learning at the cost of potential suboptimality.

The reward model follows the default shaping terms introduced in Section~V. A delay-dependent penalty is applied to each hop, a relay penalty discourages unnecessary detours, and a positive progress reward is granted when the selected action decreases the remaining geometric distance to the destination. These components collectively guide the agent toward policies that favor either minimum-latency links or minimum-hop paths, and this tradeoff is explicitly reflected in the performance evaluation.

Since the purpose of this subsection is to describe the simulation environment, we avoid listing low-level hyperparameters. The DRL training is carried out over $30{,}000$ episodes with replay sampling and target updates. The agent interacts with the corridor-filtered environment and learns to minimize the surrogate latency objective in \eqref{eq:obj_final}.

\subsection{Results and Discussions}
After training, the learned policy is evaluated on unseen constellation snapshots. Figure~\ref{cb} illustrates three representative routes for the Doha$\!\to\!$Shanghai, London$\!\to\!$New~York, and Tokyo$\!\to\!$Amsterdam pairs. The visualizations show how the agent exploits the angular corridor and the feasible ISL structure to move toward the destination using a sequence of geometrically meaningful steps. In particular, the agent learns to occasionally prefer longer in–plane hops when these hops reduce the total relay count, while still keeping the propagation distance within the reliability limits. This behavior is a direct outcome of the geometry–aware RL formulation introduced in Section~VI, in which the angular corridor and link–class feasibility are embedded into the action masking and the reward design. Consequently, the agent develops a clear routing intuition aligned with constellation geometry rather than relying solely on trial–and–error exploration.

The performance of the learned routes is also compared with an exhaustive shortest–path solver on the same feasible subgraph. For the Doha$\!\to\!$Shanghai case (Fig.~\ref{cb1}), the optimal delay is approximately $32.34$ ms, while the DRL policy attains $34.66$ ms with a nearly identical hop count. A similar trend appears in the Tokyo$\!\to\!$Amsterdam scenario (Fig.~\ref{cb3}), where the optimal delay is about $41.2$ ms and the learned route yields $42.91$ ms. These results confirm that the geometry–constrained DRL achieves near–optimal behavior without requiring full combinatorial optimization at inference time.

To further highlight the benefits of the geometry–aware design, we also compare against a direct, unconstrained RL baseline that does not impose angular corridoring nor link–class filtering. Despite being trained for up to $10^6$ episodes, this unconstrained RL struggles to consistently capture the structure of LEO geometry. For instance, in the Doha$\!\to\!$Shanghai scenario, the direct RL baseline typically converges to routes with around $11$ relays and delays near $42$ ms, substantially worse than both the proposed geometry–aware method and the exhaustive optimal solution. This contrast demonstrates that embedding geometric priors into the action space and reward shaping significantly improves convergence stability and routing quality.

To quantify the role of reward shaping in the geometry–aware RL framework, we evaluate two distinct reward configurations that directly correspond to the analytical structure introduced in Section~V. Recall that the shaped reward in \eqref{eq:reward_latency}–\eqref{eq:reward_reliability} contains three dominant terms:
\begin{itemize}
	\item a propagation–delay penalty proportional to $\ell_e/c$,
	\item a relay penalty proportional to the hop cost $\beta$,
	\item a geometric progress bonus proportional to $\eta(d_v-d_j)$.
\end{itemize}
We construct two representative configurations:
\begin{align}
	\mathcal{R}_{\mathrm{L}}:~~
	r_e^{\mathrm{L}}
	&= -\frac{\ell_e}{c}-\beta_{\mathrm{L}}+\eta_{\mathrm{L}}(d_v-d_j), \\[3pt]
	\mathcal{R}_{\mathrm{H}}:~~
	r_e^{\mathrm{H}}
	&= -\alpha_{\mathrm{H}}\,\frac{\ell_e}{c}-\beta_{\mathrm{H}}+\eta_{\mathrm{H}}(d_v-d_j),
\end{align}
where $\mathcal{R}_{\mathrm{L}}$ is a latency–dominant shaping with larger weight on $\ell_e/c$, while $\mathcal{R}_{\mathrm{H}}$ emphasizes hop minimization by assigning higher cost to relays and reducing the delay penalty. These two reward families allow the agent to shift its routing priority between minimizing end–to–end latency or minimizing hop count.

\subsubsection{Evaluation Methodology}
Each reward configuration is trained independently over $30{,}000$ episodes. After convergence, the learned policies are evaluated on $100$ random constellation snapshots for three representative intercontinental routes: Doha$\!\to\!$Shanghai, London$\!\to\!$New~York, and Tokyo$\!\to\!$Amsterdam. For each configuration and each route, we record the average hop count and average end–to–end propagation delay $T_{\mathrm{prop}}$ computed using \eqref{eq:e2e_prop_delay}. Table~\ref{tab:reward_compare} summarizes the results.

\begin{table}[t]
	\centering
	\caption{Comparison of latency–driven reward $\mathcal{R}_{\mathrm{L}}$ and hop–driven reward $\mathcal{R}_{\mathrm{H}}$ over $100$ random constellation snapshots.}
	\label{tab:reward_compare}
	\begin{tabular}{l c c c c}
		\hline
		\multirow{2}{*}{Route} & 
		\multicolumn{2}{c}{Latency–driven $\mathcal{R}_{\mathrm{L}}$} &
		\multicolumn{2}{c}{Hop–driven $\mathcal{R}_{\mathrm{H}}$} \\
		\cline{2-5}
		& Hops & Delay [ms] & Hops & Delay [ms] \\ \hline
		
		Shanghai$\!\to\!$Melbourne & 9.7 & 42.1 & 9.1 & 44.3 \\
		London$\!\to\!$New~York & 7.6 & 30.8 & 6.8 & 32.1 \\
		Doha$\!\to\!$Shanghai & 8.4 & 32.35 & 7.8 & 33.5 \\
		
		\hline
	\end{tabular}
\end{table}

The results in Table~\ref{tab:reward_compare} highlight the expected tradeoff introduced by the reward structure. With the latency–oriented reward $\mathcal{R}_{\mathrm{L}}$, the agent prioritizes minimizing $\sum_e \ell_e/c$ and therefore selects routes with lower end–to–end delay, even if this requires a few additional relays. Under the hop–oriented reward $\mathcal{R}_{\mathrm{H}}$, the agent learns a different behavior: it tends to choose longer intra–plane hops to reduce the relay count, even when these hops deviate more noticeably from the direct geometric path and introduce additional propagation delay. This behavior is particularly visible in the Doha$\!\to\!$Shanghai route, where the agent under $\mathcal{R}_{\mathrm{H}}$ consistently pushes traffic onto longer in–plane links to avoid intermediate relays. In practice, the choice between these objectives depends on network load: during normal operation latency minimization is dominant, while during peak hours reducing relays can help alleviate congestion on ISLs. The ability to tune $(\alpha,\beta,\eta)$ therefore allows the geometry–aware RL framework to flexibly prioritize either delay or relay reduction while maintaining feasible and physically consistent routing decisions.

\begin{table}[t]
	\centering
	\caption{Effect of tracking jitter $\sigma_\theta$ on Doha$\!\to\!$Shanghai routing over $100$ random snapshots.}
	\label{tab:tracking_jitter}
	\begin{tabular}{c c c}
		\hline
		$\sigma_\theta$ [$\mu$rad] & Avg.\ delay [ms] & Successful episodes [\%] \\ \hline
		$100$ & $26.2$  & $100$ \\
		$200$ & $32.35$ & $100$ \\
		$300$ & $37.5$  & $87$  \\
		$400$ & $41.2$  & $24$  \\
		$500$ & $-$     & $0$   \\
		\hline
	\end{tabular}
\end{table}

To quantify the impact of tracking accuracy on routing, we retrain the geometry–aware RL agent for five different pointing jitter levels on inter–plane ISLs, $\sigma_\theta\in\{100,200,300,400,500\}\,\mu\text{rad}$. For each value of $\sigma_\theta$, the corresponding optimal divergence and link range limits $L_{\max}^{\text{intra}}(t)$ and $L_{\max}^{\text{inter}}(t)$ are recomputed from \eqref{eq:lmax_intra_final}–\eqref{eq:lmax_inter_final}, and the agent is then trained on the resulting feasible graph. After training, each configuration is evaluated on $100$ random Doha$\!\to\!$Shanghai snapshots, where we record the average end–to–end delay of successful routes and the percentage of test episodes in which a route satisfying the QoS constraint is found. As summarized in Table~\ref{tab:tracking_jitter}, increasing $\sigma_\theta$ leads to shorter admissible link ranges and therefore higher end–to–end delay, rising from $26.2$ ms at $100\,\mu\text{rad}$ to $41.2$ ms at $400\,\mu\text{rad}$. At the same time the success probability drops from $100\%$ for $\sigma_\theta=100$ and $200\,\mu\text{rad}$ to $87\%$ at $300\,\mu\text{rad}$, $32\%$ at $400\,\mu\text{rad}$, and eventually $0\%$ at $500\,\mu\text{rad}$ where the outage constraint cannot be met. These results show that, for the considered constellation and QoS threshold, routing performance degrades rapidly once the tracking jitter exceeds about $300$–$400\,\mu\text{rad}$, and highlight the strong coupling between terminal tracking accuracy and the feasibility of reliable end–to–end optical routing.

\section{Conclusion}
This paper presented a joint geometric and QoS-aware routing framework for optical LEO satellite networks, integrating an analytically derived Gaussian-beam outage model with class-dependent distance limits and a scalable deep reinforcement learning architecture. By enforcing optical feasibility through action masking and exploiting the geometric concentration of near-optimal routes via an angular corridor, the proposed AC-MDQN achieved reliable and near-optimal performance on a Starlink-like constellation. The results demonstrated robustness under varying tracking jitter and showed that reward-parameter tuning enables a controlled trade-off between minimizing latency and reducing the number of relays. These findings highlight the value of physics-informed and geometry-aware learning for next-generation LEO networks. Future work may extend this framework to multi-flow routing, load-aware ISL scheduling, or heterogeneous constellations with hybrid RF–optical connectivity.

\appendices
\section{Proof of Proposition on Maximum Feasible Link Length}
\label{app:proof_lmax}
We start from \eqref{eq:pout_theta_sigma} and express the argument of the $Q$-function as
\begin{align}
	S(\ell_e,\theta_{\text{div}}) 
	&= \frac{\theta_{\text{div}}}{\sqrt{2}\,\sigma_{\theta,e}}\,
	\sqrt{\,a_e(t)-2\ln(\theta_{\text{div}}\ell_e)\,}.
	\label{eq:S_argument}
\end{align}
The link is feasible if and only if $S(\ell_e,\theta_{\text{div}})\ge q_{\text{th}}$. 
For each $\ell_e$, feasibility is maximized by the $\theta_{\text{div}}$ that maximizes $S(\ell_e,\theta_{\text{div}})$. 
Since $Q(\cdot)$ is strictly decreasing, minimizing $P_{\text{out}}$ corresponds to maximizing the monotone argument.

Differentiating the inner term of $S^2$ with respect to $\theta_{\text{div}}$ gives
\begin{align}
	F(\theta_{\text{div}}) 
	&= \theta_{\text{div}}^2\,\big[a_e(t)-2\ln(\theta_{\text{div}}\ell_e)\big],\\
	\frac{\partial F}{\partial \theta_{\text{div}}} 
	&= 2\theta_{\text{div}}\,[a_e(t)-2\ln(\theta_{\text{div}}\ell_e)-1].
	\label{eq:F_diff}
\end{align}
Setting $\frac{\partial F}{\partial \theta_{\text{div}}}=0$ yields
\begin{align}
	&a_e(t)-2\ln(\theta_{\text{div}}^\star\ell_e)-1=0 \nonumber \\
	&\quad\Rightarrow\quad
	\theta_{\text{div}}^\star(\ell_e,t)
	= \frac{1}{\ell_e}\,
	\exp\!\Big(\frac{a_e(t)}{2}-\tfrac{1}{2}\Big),
	\label{eq:theta_star_opt}
\end{align}
which is the same as \eqref{eq:theta_star_for_lmax}.
Substituting the optimal $\theta_{\text{div}}^\star$ from \eqref{eq:theta_star_opt}
into \eqref{eq:S_argument} yields the maximum value of the $Q$-argument:
\begin{align}
	S_{\max}(\ell_e,t)
	= \frac{1}{\sqrt{2}\,\sigma_{\theta,e}}\;
	\frac{\exp\!\big(\frac{a_e(t)}{2}-\tfrac{1}{2}\big)}{\ell_e}.
	\label{eq:Smax_rederived}
\end{align}
The feasibility condition $S_{\max}(\ell_e,t)\ge q_{\text{th}}$ then implies
\begin{align}
	\ell_e \le \frac{\exp\!\big(\frac{a_e(t)}{2}-\tfrac{1}{2}\big)}{\sqrt{2}\,\sigma_{\theta,e}\,q_{\text{th}}}.
	\label{eq:l_constraint}
\end{align}
Thus, the equality gives the \emph{maximum achievable length}:
\begin{align}
	\ell_{\max}(t;\sigma_{\theta,e})
	= \frac{\exp\!\big(\frac{a_e(t)}{2}-\tfrac{1}{2}\big)}{\sqrt{2}\,\sigma_{\theta,e}\,q_{\text{th}}},
\end{align}
confirming \eqref{eq:lmax_unconstrained_final}. 

If the unconstrained optimum \eqref{eq:theta_star_opt} violates $\theta_{\text{div}}^\star>\theta_{\max}$, the maximizer lies at the boundary. 
In this case, feasibility is determined by 
\begin{align}
	\frac{\theta_{\max}}{\sqrt{2}\,\sigma_{\theta,e}}\,
	\sqrt{a_e(t)-2\ln(\theta_{\max}\ell_{\max}^{(\mathrm{bd})})}=q_{\text{th}}.
\end{align}
Squaring both sides gives
\begin{align}
	a_e(t)-2\ln(\theta_{\max}\ell_{\max}^{(\mathrm{bd})})
	= \frac{2\sigma_{\theta,e}^2q_{\text{th}}^2}{\theta_{\max}^2},
\end{align}
which rearranges to \eqref{eq:lmax_bd_final}.

\section{Definition of the Angular Corridor in Orbital Geometry}
\label{app:angularcorridor}
Let $\hat{\boldsymbol{r}}_s$ and $\hat{\boldsymbol{r}}_d$ denote unit position vectors of the source and destination satellites on the unit sphere.  
The great-circle geodesic connecting them defines a reference plane whose unit normal is
\begin{align}
	\boldsymbol{n}
	=
	\frac{\hat{\boldsymbol{r}}_s \times \hat{\boldsymbol{r}}_d}
	{\|\hat{\boldsymbol{r}}_s \times \hat{\boldsymbol{r}}_d\|}.
	\label{eq:app_plane_normal}
\end{align}
For any satellite $j$ with unit vector $\hat{\boldsymbol{r}}_j$, the cross-track angular deviation from this plane is
\begin{align}
	\theta_j 
	= \arcsin\!\big(|\boldsymbol{n}^\top \hat{\boldsymbol{r}}_j|\big).
	\label{eq:app_cross_track}
\end{align}
An \emph{angular corridor} of half-width $\theta^\star$ is defined as
\begin{align}
	\mathcal{S}_{\theta^\star}
	= \big\{ j : \theta_j \le \theta^\star \big\},
	\label{eq:app_corridor_set}
\end{align}
which corresponds to a spherical band of angular width $2\theta^\star$ centered on the great-circle path between $\hat{\boldsymbol{r}}_s$ and $\hat{\boldsymbol{r}}_d$.  
In practice, we additionally restrict along-track angle to the geodesic segment from source to destination to exclude satellites that lie beyond the endpoints \cite{kaplan2017understanding}.

\section{Simulation Methodology and Baseline Routing}
\label{app:sim_method}
Each static snapshot is generated by sampling a constellation state at a random time $t\!\in\![0,5400]\,$s with uniformly random initial mean anomalies per plane, then computing satellite ECEF positions using circular orbits at altitude $h$ and inclination $\theta_{\mathrm{inc}}$ \cite{vallado2001fundamentals}.  
Ground sites are given in geodetic coordinates and mapped to ECEF.  
For each snapshot, the feasible graph is formed by admitting intra-plane and inter-plane links that satisfy the class-specific range limits and gateway visibility constraints defined in the main text.  
Edge costs are set to $w_e=\ell_e/c+\beta$ consistent with \eqref{eq:obj_final}.  
The baseline optimal path $\mathcal{P}^\star$ is obtained by Dijkstra on the feasible subgraph when reliability is inactive; otherwise, feasibility screening reflects the outage limit before the shortest-path search \cite{kaplan2017understanding}.  
Corridor containment is then checked using \eqref{eq:app_cross_track}–\eqref{eq:app_corridor_set}.

\begin{table}[t]
	\centering
	\caption{Constellation and evaluation parameters for the corridor study.}
	\label{tab:constellation_params}
	\setlength{\tabcolsep}{3.5pt}
	\begin{tabular}{l c}
		\hline
		Parameter & Value \\
		\hline
		Snapshots $N_{\text{snap}}$ & $10^5$ \\
		Orbital planes $N_{\text{planes}}$ & $40$ \\
		Sats per plane $N_{\text{sats/plane}}$ & $25$ \\
		Total sats $N_{\text{tot}}$ & $1000$ \\
		Altitude $h$ & $550$ km \\
		Earth radius $R_e$ & $6371$ km \\
		Inclination $\theta_{\mathrm{inc}}$ & $53^\circ$ \\
		Time $t$ & Uniform $[0,5400]$ s \\
		Half-widths $\theta$ & $10^\circ$–$40^\circ$ \\
		Edge cost & $w_e=\ell_e/c+\beta$ \\
		Feasibility & Range, elevation mask \\
		Solver & Dijkstra (feasible subgraph) \\
		\hline
	\end{tabular}
\end{table}

Table~\ref{tab:constellation_params} summarizes the constellation and evaluation parameters used in the corridor study and snapshot generation.
Figures~\ref{cf1}–\ref{cf2} depict two representative snapshots of the angular filter for $\theta\!=\!20^\circ$ and $\theta\!=\!30^\circ$ respectively.  
In each case, the sphere, orbital tracks, and all satellites are shown for context, while the retained satellites inside the corridor are highlighted.  
These examples visualize how increasing $\theta$ enlarges the retained set and raises the chance that the optimal path lies within the corridor, in agreement with Table~\ref{tab:corridor_stats}.

\bibliographystyle{IEEEtran}
\bibliography{IEEEabrv,myref}

\end{document}